\newtheorem{lemma}{Lemma}
\newtheorem{corollary}{Corollary}
\newtheorem{definition}{Definition}
\theoremstyle{definition}
\newcommand{\opt}{{\sf OPT}}
\newcommand{\cT}{\mathcal{T}}
\newcommand{\cP}{\mathcal{P}}
\newlength{\RoundedBoxWidth}
\newsavebox{\GrayRoundedBox}
\newenvironment{GrayBox}[1]%
   {\setlength{\RoundedBoxWidth}{.93\textwidth}
    \def\boxheading{#1}
    \begin{lrbox}{\GrayRoundedBox}
       \begin{minipage}{\RoundedBoxWidth}}%
   {   \end{minipage}
    \end{lrbox}
    \begin{center}
    \begin{tikzpicture}%
       \node(Text)[draw=black!20,fill=white,rounded corners,%
             inner sep=2ex,text width=\RoundedBoxWidth]%
             {\usebox{\GrayRoundedBox}};
        \coordinate(x) at (current bounding box.north west);
        \node [draw=white,rectangle,inner sep=3pt,anchor=north west,fill=white] 
        at ($(x)+(6pt,.75em)$) {\boxheading};
    \end{tikzpicture}
    \end{center}}
\newenvironment{defproblemx}[2][]{\noindent\ignorespaces%
                                \FrameSep=6pt%
                                \parindent=0pt%
                \vspace*{-1.5em}
                \ifthenelse{\isempty{#1}}{%
                  \begin{GrayBox}{\textsc{#2}}%
                }{%
                  \begin{GrayBox}{\textsc{#2} parameterized by~{#1}}%
                }
                \begin{tabular*}{\textwidth}{@{\hspace{.1em}} >{\itshape} p{1.8cm} p{0.8\textwidth} @{}}%
            }{
                \end{tabular*}%
                \end{GrayBox}%
                \ignorespacesafterend
            }
\newcommand{\Oh}{\mathcal{O}}
\newcommand{\Ggp}{$G(k,p)$\xspace}
\begin{document}

\title{Shortest Cycles With Monotone Submodular Costs\thanks{The results of this paper will appear in the 
\emph{Proceedings of the 34th Annual ACM-SIAM Symposium on Discrete Algorithms (SODA 2023)}.
The research leading to these results has received funding from the Research Council of Norway via the project BWCA (grant no. 314528). Giannos Stamoulis acknowledges support by the ANR project ESIGMA (ANR-17-CE23-0010) and the French-German Collaboration ANR/DFG Project UTMA (ANR-20-CE92-0027).}
}

\author{
Fedor V. Fomin\thanks{
Department of Informatics, University of Bergen, Norway. Emails:
\texttt{fomin@ii.uib.no},
\texttt{petr.golovach@uib.no},
\texttt{tuukka.korhonen@uib.no}
}
\and
Petr A. Golovach\addtocounter{footnote}{-1}\footnotemark{}
\and
Tuukka Korhonen\addtocounter{footnote}{-1}\footnotemark{}
\and 
Daniel Lokshtanov\thanks{Department of Computer Science, University of California, Santa Barbara, USA.
Email: \texttt{daniello@ucsb.edu}}
\and 
Giannos Stamoulis\thanks{LIRMM, Univ Montpellier, CNRS, Montpellier, France. Email: \texttt{giannos.stamoulis@lirmm.fr}}
}

\date{}

\pagenumbering{Alph}
\maketitle
\thispagestyle{empty}

\begin{abstract}
We introduce the following submodular generalization of the \textsc{Shortest Cycle} problem. 
For a nonnegative monotone submodular cost function $f$  defined on the edges (or the vertices) of an undirected graph $G$, we seek for a cycle $C$ in $G$ of minimum cost $\opt=f(C)$. 
We give an algorithm that given an $n$-vertex graph $G$, parameter $\varepsilon > 0$, and the function $f$ represented by an oracle, in time  $n^{\Oh(\log 1/\varepsilon)}$ finds a cycle $C$ in $G$ with $f(C)\leq (1+\varepsilon)\cdot \opt$.
This is in sharp contrast with the non-approximability of the closely related \textsc{Monotone Submodular Shortest $(s,t)$-Path} problem, which 
requires exponentially many queries to the oracle
for finding an $n^{2/3-\varepsilon}$-approximation [Goel et al., FOCS 2009].
We complement our algorithm with a matching lower bound.
We show that for every $\varepsilon > 0$, obtaining a $(1+\varepsilon)$-approximation requires at least $n^{\Omega(\log 1/ \varepsilon)}$ queries to the oracle.

When the function $f$ is integer-valued,  our algorithm yields that a cycle of cost $\opt$ can be found in time $n^{\Oh(\log \opt)}$. In particular, for  $\opt=n^{\Oh(1)}$  this gives a quasipolynomial-time algorithm computing a cycle of minimum submodular cost.
Interestingly, while a quasipolynomial-time algorithm often serves as a good indication that a polynomial time complexity could be achieved, we show a lower bound that $n^{\Oh(\log n)}$ queries are required even when $\opt = \Oh(n)$. 
\end{abstract} 

\newpage
\pagenumbering{arabic}
\pagestyle{plain}
\setcounter{page}{1}

\section{Introduction}\label{sec:intro}

Submodular function minimization is a fundamental problem in combinatorial optimization.  
This problem is solvable in (strongly) polynomial time \cite{cunningham1985submodular,grotschel1981ellipsoid,iwata2001combinatorial,iwata2009simple,schrijver2000combinatorial}.
However, the problem becomes intractable even with straightforward additional cardinality constraints 
 \cite{GoemansHIM09,svitkina2011submodular}.
A significant amount of research on submodular optimization is on generalizing  the classical computer science problems by 
replacing simpler objective functions with general submodular functions.  Examples of submodular minimizations over combinatorial constraints include load balancing, balanced cut \cite{svitkina2011submodular}, vertex cover \cite{goel2009approximability,iwata2009submodular,wolsey1982analysis}, shortest path, perfect matching, spanning tree \cite{goel2009approximability} or min-cut 
\cite{jegelka2009notes}.

However, it seems that for almost every natural graph problem in P (shortest $(s,t)$-path, matching, spanning tree, or minimum $(s,t)$-cut) its submodular generalizations becomes hard. 
Let  $f\colon 2^{E(G)}\rightarrow \mathbb{R}_{\geq 0}$  be a monotone submodular cost function defined by a value-giving oracle on the edges of an undirected graph $G$ with  $m$ edges and $n$ vertices.  The following computational tasks require exponentially many queries to the value oracle:
\begin{itemize}
\item Finding an   $\Oh(n^{2/3-\varepsilon})$-approximation of the minimum cost of an $(s,t)$-path (\textsc{Submodular Shortest $(s,t)$-Path})  
 \cite{goel2009approximability}; 
\item Finding an   $\Oh(n^{1-\varepsilon})$-approximation of the minimum cost of a  perfect matching (\textsc{Submodular Perfect Matching})  \cite{goel2009approximability}; 
\item Finding an   $\Oh(n^{1-\varepsilon})$-approximation of the minimum cost of a  spanning tree (\textsc{Submodular Minimum Spanning Tree})  \cite{goel2009approximability}; 
\item  Finding an   $\Oh(n^{1/3-\varepsilon})$-approximation of the minimum cost of an $(s,t)$-cut (\textsc{Submodular Minimum  $(s,t)$-Cut}) \cite{jegelka2009notes}. 
\end{itemize}

We discover an interesting anomaly, a classical problem in P, whose monotone submodular generalization strongly deviates from this common pattern. This is the problem of computing the girth, that is,  the length of a shortest cycle, of an undirected graph.
In sharp contrast to all these non-approximability results,  we show that the problem of finding a cycle in a graph with minimum monotone submodular cost admits a polynomial-time approximation scheme (PTAS) and a quasipolynomial-time algorithm when the values of the submodular function are polynomially-bounded integers.
More precisely,  for a graph $G$ and a function \mbox{$f\colon 2^{V(G)}\rightarrow \mathbb{R}_{\geq 0}$}, we define 
\mbox{$\opt=\min\{f(C)\colon C\subseteq V(G) \text{ induces a cycle of  }G\}$}.
Our first main result is the following theorem.

\begin{restatable}{theorem}{thmmaincycle}
\label{thm:min-cycle}
There is an algorithm that given an $n$-vertex graph $G$, parameter $\varepsilon > 0$, and a monotone submodular function $f\colon 2^{V(G)}\rightarrow \mathbb{R}_{\geq 0}$ represented by an oracle, finds a cycle $C$ in $G$ with $f(C)\leq (1+\varepsilon)\cdot \opt$ in time $n^{\Oh(\log 1/\varepsilon)}$.
\end{restatable}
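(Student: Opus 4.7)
\emph{Overall strategy.} The plan is a recursive divide-and-conquer algorithm. I would define an auxiliary subproblem: given $(G,f)$, a cyclically ordered sequence $X = x_1, \ldots, x_k$ of vertices assumed to lie on an optimal cycle $C^*$ in this order, and a precision parameter $\delta > 0$, output a cycle through $X$ (in the given cyclic order) of cost at most $(1+\delta)\opt$. The top-level call would be made with $X$ consisting of one or two guessed vertices on $C^*$ (trying all $n^{\Oh(1)}$ possibilities) and $\delta = \varepsilon$. Each recursive step refines $X$ by guessing one additional ``midpoint'' vertex on one of the current arcs (again trying all $n$ choices) and invoking itself with the refined sequence and halved precision. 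The recursion depth would be $\Oh(\log 1/\varepsilon)$, giving total running time $n^{\Oh(\log 1/\varepsilon)}$.

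\emph{Correctness via submodularity.} For the approximation guarantee I would argue by induction on the recursion depth. Decomposing the cost of the returned cycle via the telescoping identity $f\bigl(\bigcup_i Q_i\bigr) = \sum_i \bigl(f(Q_1 \cup \cdots \cup Q_i) - f(Q_1 \cup \cdots \cup Q_{i-1})\bigr)$ and comparing term by term with an analogous decomposition of $f(C^*)$, diminishing returns would control the per-term slack. A careful, balanced choice of midpoint (splitting the current arc so that the two resulting sub-arcs have roughly equal marginal $f$-cost) should reduce the per-arc excess by a constant factor per level, so that after $\Oh(\log 1/\varepsilon)$ levels the accumulated slack is at most $\varepsilon\cdot\opt$.

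\emph{Main obstacle.} The sticking point is that, taken in isolation, each per-arc subproblem is essentially a \textsc{Submodular Shortest $(s,t)$-Path} instance, which --- as emphasized in the introduction --- is not approximable in the oracle model. The only way through is to exploit the global cycle constraint: the requirement that the arcs assemble into a single closed cycle couples them and rules out the adversarial instances underlying the path hardness. Turning this coupling into a concrete algorithmic handle is the main technical work, and I expect it to take the form of a structural lemma showing that good guessed midpoints always witness a $(1+\delta)$-approximate joint arc assignment, proved by a submodular exchange argument that crucially uses the cyclic structure.
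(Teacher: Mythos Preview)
Your proposal has a genuine gap, and you have correctly put your finger on it: the per-arc subproblems are \textsc{Submodular Shortest $(s,t)$-Path} instances, and the appeal to ``the global cycle constraint couples them'' is not an algorithm. Nothing in your outline explains how, given the current sequence $X$ and a target arc between consecutive anchors $x_i,x_{i+1}$, you would actually \emph{compute} anything about that arc---not the arc itself, not a midpoint on it, not even its approximate marginal cost---without solving a submodular $(s,t)$-path problem. The hardness instances for submodular $(s,t)$-path embed perfectly well as a single arc of a longer cycle whose remaining arcs are trivial, so the cycle constraint by itself does not ``rule out the adversarial instances''; something more concrete is needed.

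The paper's approach is quite different and sidesteps this obstacle entirely. It never attempts to find a path between two prescribed endpoints. Instead, it first builds a polynomial-time $2$-approximation via a Dijkstra-style procedure that, as a byproduct, produces for every vertex $v$ an induced tree $T_f(v)$ with the following key property: every path leaving the tree has $f$-cost at least $\opt/2$. From these trees one extracts a family $\mathcal{P}$ of at most $nm$ paths such that \emph{every} cycle contains some $P\in\mathcal{P}$ as a segment, and that $P$ satisfies $f(P)\ge \opt/2$. The recursion then branches over all $P\in\mathcal{P}$ and replaces $f$ by the contracted function $g'(X)=f(X\cup V(P))-f(P)$, which is again nonnegative, monotone, and submodular. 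Because $f(P)\ge\opt/2$, the new optimum satisfies $\opt(G,g')\le\opt(G,f)/2$; after $k=\lceil\log(1/\varepsilon)\rceil$ levels the residual optimum is at most $\varepsilon\cdot\opt$, and the $2$-approximation at the leaves finishes the job. The two ideas you are missing are (i) the Dijkstra tree that \emph{finds} a heavy segment of the unknown optimal cycle without solving any $(s,t)$-path problem, and (ii) the contraction $g'(X)=f(X\cup V(P))-f(P)$, which lets the recursion forget the guessed segment and work with a fresh submodular instance whose optimum has halved.
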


We stated \Cref{thm:min-cycle} for a function $f$ defined on the vertices of a graph.
An easy reduction by placing a new vertex on every edge shows that the same result holds for monotone submodular functions defined on the edges of a multigraph, see \Cref{cor:min-cycle}.



When the function $f$ is integer-valued,  \Cref{thm:min-cycle} (by setting  $\varepsilon = \frac{1}{w +1}$ with $\opt\leq w\leq 2\opt$, where $w=f(C)$ for the cycle $C$ returned by the approximation algorithm for $\varepsilon=1/2$)    
implies that a cycle of cost $\opt$ can be found in time $n^{\Oh(\log \opt)}$. In particular, when $\opt=n^{\Oh(1)}$, it gives a quasipolynomial-time algorithm computing a cycle of minimum monotone submodular cost.
For example, this holds when  $f$ is a rank function of a matroid.

\begin{restatable}{corollary}{corquasip}
\label{cor:quasi}
There is an algorithm that given an $n$-vertex graph $G$ and an integer monotone submodular function $f\colon 2^{V(G)}\rightarrow \mathbb{Z}_{\geq 0}$ represented by an oracle, finds a cycle $C$ in $G$ with $f(C) = \opt$ in time $n^{\Oh(\log \opt)}$.
\end{restatable}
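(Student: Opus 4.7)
The plan is a two-stage invocation of the approximation scheme from \Cref{thm:min-cycle}. Since $f$ is integer-valued, the optimum $\opt$ is a nonnegative integer, and any cycle $C$ with $f(C) < \opt + 1$ automatically satisfies $f(C) = \opt$. Hence it suffices to run the approximation algorithm with some $\varepsilon$ for which $(1+\varepsilon)\opt < \opt + 1$, equivalently $\varepsilon < 1/\opt$. The running time $n^{\Oh(\log 1/\varepsilon)}$ would then already match the desired bound $n^{\Oh(\log \opt)}$.

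The main obstacle is that $\opt$ is unknown in advance, so we cannot directly set $\varepsilon$ in terms of $\opt$. To get around this, I would first invoke \Cref{thm:min-cycle} with the fixed constant $\varepsilon = 1/2$ to obtain, in $n^{\Oh(1)}$ time, a cycle $C_0$ whose cost $w := f(C_0)$ satisfies $\opt \le w \le \tfrac{3}{2}\opt \le 2\opt$. Thus $w$ is a $2$-approximation of $\opt$ that is produced without knowing $\opt$ itself, and using $w$ we can now pick a sufficiently small target accuracy in terms of the unknown optimum.

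I would then invoke \Cref{thm:min-cycle} a second time with parameter $\varepsilon' := 1/(w+1)$, producing a cycle $C$ satisfying
\[
f(C) \;\le\; \left(1 + \frac{1}{w+1}\right)\opt \;=\; \opt + \frac{\opt}{w+1} \;<\; \opt + 1,
\]
where the strict inequality uses $\opt \le w$. Since $f(C)$ and $\opt$ are nonnegative integers and $\opt$ is the minimum, this forces $f(C) = \opt$, so $C$ is an optimal cycle.

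For the running time, the first call costs $n^{\Oh(1)}$, while the second costs $n^{\Oh(\log(w+1))} = n^{\Oh(\log \opt)}$ since $w \le 2\opt$. The degenerate case $\opt = 0$ is already handled by the first call, which in that case returns a cycle $C_0$ with $f(C_0) = 0$. Altogether, the procedure runs in $n^{\Oh(\log \opt)}$ time, as claimed.
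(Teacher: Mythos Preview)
Your proposal is correct and follows essentially the same approach as the paper: first run the approximation algorithm of \Cref{thm:min-cycle} with $\varepsilon=1/2$ to obtain $w=f(C_0)$ with $\opt\le w\le 2\opt$, then rerun it with $\varepsilon=\frac{1}{w+1}$ to force $f(C)<\opt+1$ and hence $f(C)=\opt$ by integrality. Your treatment of the degenerate case $\opt=0$ and the running-time analysis are also in line with the paper's argument.
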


Our second main result is that the running times of the algorithms of \Cref{thm:min-cycle} and \Cref{cor:quasi} are asymptotically tight.
Note that it is sufficient to prove \Cref{cor:quasi} to be tight, as any improvement to \Cref{thm:min-cycle} would also improve \Cref{cor:quasi}.

\begin{restatable}{theorem}{thmlowboundcycle}
\label{thm:lowe:bound:gen}
There is no algorithm computing a cycle of cost at most $\opt$ on a given $n$-vertex graph and an integer monotone submodular function $f : 2^{V(G)} \rightarrow \mathbb{Z}_{\ge 0}$ represented by an oracle, using at most $g(\opt) \cdot n^{o(\log \opt)}$ queries to the oracle, for any computable function $g$.
\end{restatable}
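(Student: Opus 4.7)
The plan is to derive this lower bound from the $n^{\Omega(\log 1/\varepsilon)}$-query lower bound for $(1+\varepsilon)$-approximation (stated in the abstract and proved earlier in the paper). The intuition is that, for an integer-valued submodular cost, computing the exact optimum $\opt$ is at least as hard as a $(1+\varepsilon)$-approximation with $\varepsilon = \Theta(1/\opt)$, so an exact algorithm faster than $n^{\Omega(\log \opt)}$ would yield an approximation algorithm faster than $n^{\Omega(\log 1/\varepsilon)}$, contradicting the approximation lower bound.

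Concretely, the first step is to take the adversarial family used to prove the approximation lower bound: two monotone submodular functions $f_0, f_1$ on the same $n$-vertex graph whose optimal cycle costs differ by a factor of at least $1+\varepsilon$, and which cannot be distinguished by $o(n^{\log 1/\varepsilon})$ oracle queries. Assuming the construction is (or can be made) integer-valued, multiply all values by an integer $M \ge \lceil 1/\varepsilon \rceil$, so that $\opt(f_0) = M$ and $\opt(f_1) \ge \lceil (1+\varepsilon) M \rceil \ge M+1$. Scaling by a positive integer preserves monotonicity and submodularity, so this yields an integer-valued hard instance with $\opt = \Theta(1/\varepsilon)$, and hence $\log \opt = \Theta(\log 1/\varepsilon)$.

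The second step is the contradiction. Assume an algorithm $\mathcal{A}$ computes a cycle of exact minimum cost using $g(\opt) \cdot n^{c(\opt)}$ queries, where $g$ is computable and $c(k) = o(\log k)$. Running $\mathcal{A}$ on the rescaled instance exactly recovers $\opt \in \{M, M+1, \dots\}$ and thereby distinguishes $f_0$ from $f_1$, using at most $g(O(1/\varepsilon)) \cdot n^{c(O(1/\varepsilon))} = g'(1/\varepsilon) \cdot n^{o(\log 1/\varepsilon)}$ queries. For every sufficiently small fixed $\varepsilon > 0$, $g'(1/\varepsilon)$ is a constant and $n^{o(\log 1/\varepsilon)}$ is asymptotically dominated by $n^{\Omega(\log 1/\varepsilon)}$ as $n \to \infty$, which contradicts the approximation lower bound.

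The main obstacle is arranging the integer-valued rescaling while preserving both the $(1+\varepsilon)$-gap and the submodular structure of the approximation-hard construction. If the witnessing functions are built from combinatorial primitives such as coverage functions or matroid rank functions (as is typical for submodular hardness constructions), their values are already integral and the scaling is immediate. If the construction uses real-valued or rational weights, one must clear denominators and round carefully, verifying that monotonicity, submodularity, and the multiplicative gap all survive; this can be done by taking a common integer multiple large enough that $\varepsilon \cdot M \ge 1$. A secondary, purely bookkeeping point is that the computable prefactor $g(\opt)$ becomes a computable function of $1/\varepsilon$ under the reduction, which does not affect the asymptotic lower bound in $n$.
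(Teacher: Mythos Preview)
Your reduction runs in the wrong direction and is therefore circular. In this paper the $(1+\varepsilon)$-approximation lower bound is \Cref{thm:lowe:bound:eps}, which is stated as a \emph{corollary} of \Cref{thm:lowe:bound:gen}; the introduction makes the dependency explicit (``it is sufficient to prove \Cref{cor:quasi} to be tight, as any improvement to \Cref{thm:min-cycle} would also improve \Cref{cor:quasi}''). There is no independent, earlier proof of the approximation lower bound for you to invoke.

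Even granting the approximation lower bound as a black box, the reduction has a genuine gap. That statement guarantees only that for each $\varepsilon$ there is a family of $n$-vertex instances requiring $n^{\Omega(\log 1/\varepsilon)}$ queries to $(1+\varepsilon)$-approximate; it says nothing about the value of $\opt$ on those instances. Your argument needs $\opt = \Theta(1/\varepsilon)$ so that $n^{o(\log \opt)} = n^{o(\log 1/\varepsilon)}$ and the assumed exact algorithm beats the lower bound. If the hard instances instead have $\opt$ growing with $n$ (nothing in the black-box statement forbids this), then $g(\opt)\cdot n^{o(\log \opt)}$ can be, say, $n^{\sqrt{\log n}}$, which dwarfs the polynomial $n^{\Omega(\log 1/\varepsilon)}$ for fixed $\varepsilon$, and no contradiction arises. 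Your rescaling trick does not repair this: you assume $\opt(f_0)=1$ before scaling, which is unjustified; scaling down destroys integrality; and scaling up only helps when $\opt$ was already small. To make the argument go through you must open the construction and verify that $\opt=\Theta(1/\varepsilon)$ there---at which point you are re-deriving the content of \Cref{lem:LBMAIN}. The paper accordingly proves \Cref{thm:lowe:bound:gen} directly from \Cref{lem:LBMAIN}: assume a $g(\opt)\cdot n^{o(\log\opt)}$-query algorithm, pick $k=\lceil\log_2 k'\rceil$ and $p=g(4k')\cdot N$ for suitably large $k',N$, obtain an instance on $n=\Theta(kp)$ vertices with $\opt=\Theta(2^k)$, and check that the required $p^k$ queries exceed $g(\opt)\cdot n^{(\log_2 \opt)/16}$.
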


\begin{restatable}{corollary}{thmlowboundeps}
\label{thm:lowe:bound:eps}
There is no algorithm computing a cycle of cost at most $(1+\varepsilon) \cdot \opt$ on a given $n$-vertex graph and an integer monotone submodular function $f : 2^{V(G)} \rightarrow \mathbb{Z}_{\ge 0}$ represented by an oracle, using at most $t(1/\varepsilon) \cdot n^{o(\log 1/\varepsilon)}$ queries to the oracle, for any computable function $t$.
\end{restatable}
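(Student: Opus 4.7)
The strategy is to derive \Cref{thm:lowe:bound:eps} from \Cref{thm:lowe:bound:gen} by exploiting the integrality of~$f$. Suppose, for contradiction, that an algorithm $\mathcal{A}$ computes a $(1+\varepsilon)$-approximate minimum-cost cycle using at most $t(1/\varepsilon) \cdot n^{o(\log 1/\varepsilon)}$ queries for some computable $t$. The key observation is that, since $f$ is integer-valued, a cycle of cost at most $(1+\varepsilon)\cdot\opt$ has cost \emph{exactly} $\opt$ whenever $\varepsilon < 1/\opt$. In particular, with $\varepsilon = 1/(\opt+1)$ we get $(1+\varepsilon)\cdot\opt = \opt + \opt/(\opt+1) < \opt+1$, so the returned cycle has integer cost at most $\opt$, and therefore equal to~$\opt$.

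Since $\opt$ is not known in advance, we turn $\mathcal{A}$ into an exact algorithm $\mathcal{B}$ by enumeration. For $k = 1, 2, 3, \ldots$, algorithm $\mathcal{B}$ runs $\mathcal{A}$ with $\varepsilon_k = 1/(k+1)$ to obtain a cycle $C_k$, evaluates $f(C_k)$ using one additional oracle query, and outputs $C_k$ as soon as $f(C_k) \le k$. For every $k < \opt$ we have $f(C_k) \ge \opt > k$, so $\mathcal{B}$ cannot terminate prematurely; for $k = \opt$, the integrality argument above guarantees $f(C_k) = \opt \le k$, so $\mathcal{B}$ terminates and returns a cycle of cost exactly~$\opt$.

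The total number of oracle queries used by $\mathcal{B}$ is bounded by
\[
\sum_{k=1}^{\opt}\bigl(t(k+1)\cdot n^{o(\log(k+1))} + 1\bigr) \;\le\; \opt \cdot \bigl(t(\opt+1)+1\bigr) \cdot n^{o(\log \opt)},
\]
where we used $\log(\opt+1) = \Theta(\log \opt)$ to absorb the shift in $k$ into the $o(\cdot)$ term. Setting $g(\opt) := \opt \cdot (t(\opt+1)+1)$, which is computable whenever $t$ is, we obtain an exact algorithm running in $g(\opt) \cdot n^{o(\log \opt)}$ queries, contradicting \Cref{thm:lowe:bound:gen}.

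No serious obstacle is anticipated; the only point requiring a moment of care is that the enumeration over $k$ does not inflate the asymptotic query count, which holds because the leading factor $n^{o(\log \opt)}$ dominates both the multiplicative prefactor $\opt$ and the additional verification query per round, and because $\log(\opt+1)$ and $\log \opt$ differ only by a multiplicative constant for large~$\opt$.
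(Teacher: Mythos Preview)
Your proposal is correct and follows essentially the same approach the paper uses: the paper notes (in the introduction) that any improvement to the PTAS would, via the integrality trick behind \Cref{cor:quasi}, yield an improved exact algorithm and thus contradict \Cref{thm:lowe:bound:gen}. The only cosmetic difference is that the paper first runs the hypothetical approximation with $\varepsilon=1/2$ to obtain $w\in[\opt,2\opt]$ and then once more with $\varepsilon=1/(w+1)$, whereas you iterate $k=1,2,\ldots$ until $f(C_k)\le k$; both give the same $g(\opt)\cdot n^{o(\log\opt)}$ bound.
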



In particular, \Cref{thm:lowe:bound:gen} rules out fixed-parameter tractability (FPT) parameterized by $\opt$ and
\Cref{thm:lowe:bound:eps} rules out efficient polynomial-time approximation schemes (EPTAS).

The same construction as in \Cref{thm:lowe:bound:gen} also rules out the improvement of the quasipolynomial time in the setting where $\opt = \Oh(n)$.

\begin{restatable}{theorem}{thmlowboundquasip}
\label{the:noquasip}
There is no algorithm computing a cycle of cost at most $\opt = \Oh(n)$ on a given $n$-vertex graph and an integer monotone submodular function $f : 2^{V(G)} \rightarrow \mathbb{Z}_{\ge 0}$ represented by an oracle, using at most $n^{o(\log n)}$ queries to the oracle.
\end{restatable}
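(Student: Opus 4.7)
The plan is to re-use, essentially verbatim, the family of hard instances constructed for \Cref{thm:lowe:bound:gen} and to observe that this family already places \opt in the linear regime $\opt = \Oh(n)$, so that the query lower bound $n^{\Omega(\log \opt)}$ coming from \Cref{thm:lowe:bound:gen} becomes $n^{\Omega(\log n)}$ for those instances.

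I would proceed as follows. First, I would inspect the construction that witnesses \Cref{thm:lowe:bound:gen}: for each integer parameter $k$ it produces an $N(k)$-vertex graph $G_k$ together with a monotone submodular oracle $f_k$ satisfying $\opt(G_k,f_k)=k$, such that any algorithm correctly outputting an optimal cycle on $(G_k,f_k)$ must perform $\Omega\bigl(N(k)^{c\log k}\bigr)$ oracle queries for some absolute constant $c>0$. The key feature I would extract is that $N(k)=\Theta(k)$, i.e.\ the hard instances already satisfy $\opt=\Theta(n)$; if the construction as written blows the vertex count up to some larger polynomial of $k$, then a small modification (removing or compressing auxiliary gadgets whose only role is to pad the graph but which contribute no useful cycles) brings the number of vertices down to $\Theta(k)$ without changing the optimal cycle or the query lower bound, because cycles visit each vertex at most once and so are insensitive to such pruning.

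Next, I would argue the theorem by contradiction. Suppose some algorithm $\mathcal{A}$ solves every $n$-vertex instance $(G,f)$ with $\opt=\Oh(n)$ using at most $n^{o(\log n)}$ queries. Running $\mathcal{A}$ on the hard instance $(G_k,f_k)$ above, where $n=N(k)=\Theta(k)=\Theta(\opt)$, yields a query complexity of $N(k)^{o(\log N(k))}=N(k)^{o(\log k)}$, contradicting the $\Omega\bigl(N(k)^{c\log k}\bigr)$ lower bound inherited from \Cref{thm:lowe:bound:gen}. Note that the stronger factor $g(\opt)$ present in \Cref{thm:lowe:bound:gen} is not needed here: for \Cref{the:noquasip} we only claim the absence of an $n^{o(\log n)}$-query algorithm, which is the special case $g\equiv 1$ of \Cref{thm:lowe:bound:gen} once $\opt=\Theta(n)$ is fixed.

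The main obstacle is therefore a single verification, not a new construction: I must confirm that the hard instances built for \Cref{thm:lowe:bound:gen} can be taken to have $N(k)=\Theta(k)$. If the construction uses, say, $N(k)=\mathrm{poly}(k)$ dummy vertices, I would revisit those gadgets and explain why reducing their size preserves both the optimal cycle cost $k$ and the query lower bound (i.e.\ that the adversary argument still fools the algorithm on the trimmed instance). All other ingredients, including the translation from the $\log \opt$ bound to the $\log n$ bound and the disposal of the $g(\opt)$ factor, are immediate once this linear-\opt regime is available.
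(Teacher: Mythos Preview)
Your high-level plan is the same as the paper's: both proofs invoke the underlying two-parameter construction $G(k,p)$ (summarized in \Cref{lem:LBMAIN}) and choose parameters so that $\opt=\Theta(n)$. The paper does this directly from \Cref{lem:LBMAIN} by taking $p=N$ and $k=\lceil\log_2 N\rceil$, which gives $n=\Theta(N\log N)$, $\opt\le 4n$, and a lower bound of $p^k=N^{\lceil\log_2 N\rceil}\ge n^{(\log_2 n)/4}$.

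Where your proposal goes wrong is in its treatment of the \Cref{thm:lowe:bound:gen} instances. Those instances are \emph{not} a one-parameter family $(G_k,f_k)$ with $N(k)=\Theta(k)$: the construction has two free parameters $k$ and $p$, with $n=kp+k+2$ and $\opt\approx 2^{k+1}$, and in the proof of \Cref{thm:lowe:bound:gen} the parameter $p$ is deliberately taken large (it absorbs the $g(\opt)$ factor), so on those instances $\opt$ is vastly smaller than $n$, not $\Theta(n)$. More importantly, your fallback of ``removing or compressing auxiliary gadgets whose only role is to pad the graph but which contribute no useful cycles'' mischaracterizes the construction: the parallel edges are not padding---each one contributes to a distinct long cycle, and the lower bound $p^k$ counts exactly those cycles. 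Shrinking $p$ therefore shrinks the lower bound, and you cannot claim the bound $N(k)^{c\log k}$ is ``inherited'' after such pruning; it must be recomputed.

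The fix is simply to bypass the proof of \Cref{thm:lowe:bound:gen} and apply \Cref{lem:LBMAIN} directly with a parameter choice balancing $n$ against $\opt$, namely $k\approx\log p$. Then $n\approx kp$, $\opt\approx 2^{k}\approx p=\Theta(n/\log n)=\Oh(n)$, and $p^k\approx n^{\Theta(\log n)}$, which is the desired contradiction. This is a two-line calculation once you work from the right lemma; your plan's ``main obstacle'' of verifying $N(k)=\Theta(k)$ after pruning is both unnecessary and, as stated, would not go through.
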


We note that on directed graphs the problem is much harder: The same construction as the one by Goel et al.~\cite{goel2009approximability} for undirected $(s,t)$-path shows that $\Oh(n^{2/3-\varepsilon})$-approximation for the minimum cost directed cycle requires an exponential number of queries to the oracle.

\Cref{thm:min-cycle} also yields a PTAS for computing the submodular connectivity of a planar multigraph. 
The \emph{connectivity} of a connected multigraph is the size of its minimum cut, that is, the minimum number of edges whose removal disconnects it.
In \textsc{Monotone Submodular Connectivity} (also known as \textsc{Monotone Submodular Min-Cut}), for a connected multigraph $G$ with monotone submodular cost function $f$ on $E(G)$, the task is to identify the minimum cost $f(C)$ of a cut $C \subseteq E(G)$.  
In a connected planar multigraph $G$, an edge set of every simple cycle of $G$ is an edge set of an inclusion minimal edge cut in the dual of $G$, and vice versa.
Thus by  \Cref{thm:min-cycle}, we have the following corollary.

\begin{corollary}
\label{cor:min-cut}
There is an algorithm that given a planar $m$-edge multigraph $G$, parameter $\varepsilon >0$, and a monotone submodular function $f : 2^{E(G)} \rightarrow \mathbb{R}_{\ge 0}$ represented by an oracle, finds a cut $C$ in $G$ with $f(C)\leq (1+\varepsilon)\cdot \opt$ in time $m^{\Oh(\log 1/\varepsilon)}$ (where $\opt$ is the minimum cost of a cut).
\end{corollary}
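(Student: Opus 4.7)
The plan is to reduce \textsc{Monotone Submodular Connectivity} on a planar multigraph $G$ to \textsc{Monotone Submodular Minimum Cycle} on its planar dual $G^\ast$, and then to invoke the edge-version of \Cref{thm:min-cycle} stated as \Cref{cor:min-cycle}. Fix an arbitrary planar embedding of $G$ and construct the planar dual $G^\ast$, which is also a multigraph with $|E(G^\ast)| = m$. The natural bijection between $E(G)$ and $E(G^\ast)$ lets us transport the cost function $f$ to a function $f^\ast\colon 2^{E(G^\ast)}\to \mathbb{R}_{\geq 0}$ by setting $f^\ast(S) = f(S)$ under this identification; clearly $f^\ast$ is still nonnegative, monotone, and submodular, and each query to $f^\ast$ costs exactly one query to $f$.

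Next I would argue that the optimum cuts in $G$ and the optimum cycles in $G^\ast$ coincide. Since $f$ is monotone and nonnegative, every cut in $G$ contains an inclusion-minimal cut of no larger cost, so
\[
\opt \;=\; \min\{f(C)\colon C\subseteq E(G) \text{ is an inclusion-minimal cut of } G\}.
\]
By the standard planar duality invoked in the excerpt, the inclusion-minimal edge cuts of $G$ are exactly the edge sets of simple cycles of $G^\ast$. Hence, writing $\opt^\ast$ for the minimum cost of a cycle in $G^\ast$ under $f^\ast$, we have $\opt^\ast = \opt$, and any cycle of $G^\ast$ of cost at most $(1+\varepsilon)\cdot \opt^\ast$ corresponds, via the same bijection, to a cut of $G$ of cost at most $(1+\varepsilon)\cdot \opt$.

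Now I would apply \Cref{cor:min-cycle} to the multigraph $G^\ast$, the function $f^\ast$, and the parameter $\varepsilon$, obtaining in time $m^{\Oh(\log 1/\varepsilon)}$ a simple cycle $C^\ast \subseteq E(G^\ast)$ with $f^\ast(C^\ast)\leq (1+\varepsilon)\cdot \opt^\ast$. Transporting $C^\ast$ back to $E(G)$ yields the desired cut $C$ with $f(C)\leq (1+\varepsilon)\cdot \opt$, and the total running time is $m^{\Oh(\log 1/\varepsilon)}$ as required.

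The only genuinely delicate point is handling degenerate features of the dual: a bridge of $G$ corresponds to a loop of $G^\ast$, and parallel edges produce multi-edges or length-two cycles. Both of these are already within the scope of \Cref{cor:min-cycle}, but if preferred one can preprocess bridges of $G$ separately (their cost is immediately a candidate for $\opt$) so that the dual reduction is applied only to a bridgeless planar multigraph, where every inclusion-minimal cut is the edge set of a genuine simple cycle of length at least two in the dual. No other step is substantive: the reduction is a plain translation along planar duality, and the approximation guarantee is inherited unchanged from \Cref{thm:min-cycle}.
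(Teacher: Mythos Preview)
Your proposal is correct and follows exactly the route the paper takes: use planar duality to identify inclusion-minimal cuts of $G$ with simple cycles of the dual $G^\ast$, transport $f$ along the natural bijection $E(G)\leftrightarrow E(G^\ast)$, and then invoke \Cref{cor:min-cycle} (the edge version of \Cref{thm:min-cycle}). Your extra care with bridges/loops is a welcome technical detail that the paper leaves implicit.
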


The same lower bounds of \Cref{thm:lowe:bound:gen} and \Cref{the:noquasip} apply also to this setting (with $n$ replaced by $m$), showing that \Cref{cor:min-cut} is optimal, because the graph we use for the lower bound is planar (in particular, it is a dual of a planar multigraph).
The best previously known upper bound on submodular connectivity on planar graphs is due to Jegelka  and Bilmes \cite{jegelka2009notes} who gave an 
 $\Oh( \sqrt{n})$-approximation for this problem. 
 
An interesting variant of submodular connectivity was considered by Ghaffari,  Karger,  and Panigrahi~\cite{ghaffari2017random}.
In the \textsc{Hedge Connectivity} problem, the edge set of a multigraph $G$ is partitioned into sets called \emph{hedges}.
The graph is \emph{$k$-hedge-connected} if it is necessary to remove at least $k$ edge sets (hedges) in order to disconnect $G$.
Ghaffari,  Karger,  and Panigrahi~\cite{ghaffari2017random} gave a PTAS of running time   $n^{\Oh(\log 1/\varepsilon)}$ and a quasipolynomial-time exact algorithm for hedge connectivity.
Very recently Jaffke et al.~\cite{JaffkeLMPS23} (see also~\cite{JaffkeLMPS22}) complemented this result by showing that the quasi-polynomial running time is  optimal up to the Exponential Time Hypothesis (ETH). Namely, they proved that 
 the existence of an algorithm with running time $(nk)^{o(\log n/(\log\log n)^2)}$ would contradict ETH.
The hedge function (i.e., the number of hedges covering an edge subset) is a monotone submodular function.
Thus, on planar graphs, \Cref{cor:min-cut} extends the PTAS of~\cite{ghaffari2017random} from hedges to monotone submodular functions.
Similarly, the quasipolynomial algorithm for integer-valued monotone submodular functions with $\opt=n^{\Oh(1)}$, extends the quasipolynomial exact algorithm of Ghaffari,  Karger,  and Panigrahi on planar graphs.
\medskip

While \Cref{thm:lowe:bound:gen} refutes the existence of a polynomial-time (or even FPT) algorithm computing a submodular minimum cycle or submodular minimum cut in planar graphs with polynomially bounded integer-valued functions, the complexity of the hedge variants of these problems remains open (here, by \emph{hedge minimum cycle} we mean the minimum number of hedges covering a cycle).
In graph theory, this problem is also known as the \textsc{Colored Cycle} problem \cite{BroersmaLWZ05}. 
In this reformulation of the problem, the edges (or vertices) of the given graph are colored and the task is to select a cycle containing the minimum number of different colors. 
Broersma et al. claimed the \textsc{Colored Cycle} problem to be NP-hard, without proof
\cite[Corollary~16]{BroersmaLWZ05}.
The quasipolynomial algorithm for this problem that follows by \Cref{cor:quasi} raises serious concerns about this claim. 
Note that the hedge minimum $(s,t)$-cut and hedge minimum $(s,t)$-path are indeed NP-hard~\cite{BroersmaLWZ05,zhang2011approximation}.

Motivated by the question on whether \textsc{Hedge Minimum Cycle} admits a polynomial-time algorithm or our quasipolynomial-time algorithm is optimal, we study the problem in a special case that corresponds to a natural problem about families of sets.
In particular, we consider the \textsc{Hedge Minimum Cycle} problem on the subdivisions of the graphs used for the lower bound construction of \Cref{thm:lowe:bound:gen} and \Cref{the:noquasip} -- see \Cref{fig:lbG}.
In these graphs, the \textsc{Hedge Minimum Cycle} problem is equivalent to the following set family problem:
For an integer $k$ and universe $U$, we say that a family $\mathcal{F}$ of sets over $U$ is \emph{$k$-wide} if for any two distinct sets $A,B \in \mathcal{F}$ it holds that $|A \cup B| > k$.
In the \textsc{Wide Family Hitting} problem, we are given a universe $U$, an integer $k$, and $m$ $k$-wide families $\mathcal{F}_1, \ldots, \mathcal{F}_m$.
The task is to decide if it is possible to select one set $S_i \in \mathcal{F}_i$ from each family $\mathcal{F}_i$ so that $|\bigcup_{i=1}^m S_i| \le k$.
We denote the input size by $N = \sum_{i=1}^m \sum_{A \in \mathcal{F}_i} |A|$.
The algorithm of \Cref{cor:quasi} gives an $N^{\Oh(\log k)}$ time algorithm for \textsc{Wide family hitting}, in particular it can be solved in quasipolynomial time, and therefore is unlikely to be NP-hard.

While it remains open whether \textsc{Wide Family Hitting} admits a polynomial-time algorithm, we show two results giving evidence that the special case of hedges is indeed easier than the general case of monotone submodular functions.
First, we show that \textsc{Wide Family Hitting} is fixed-parameter tractable when parameterized by $k$.
This is in contrast to the lower bound of \Cref{thm:lowe:bound:gen}.

\begin{restatable}{theorem}{thmwfhfpt}
\label{the:wfh-fpt}
There is a $2^{\Oh(k \log k)} N^{\Oh(1)}$ time algorithm for {\sc Wide Family Hitting}.
\end{restatable}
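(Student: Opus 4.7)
The plan is a branching algorithm exploiting the $k$-wide structure. We maintain a partial solution $T \subseteq U$ (initially empty) with remaining budget $s := k - |T|$. At each recursive step we pick an unsatisfied family $\mathcal{F}_i$ (one with no $A \in \mathcal{F}_i$ satisfying $A \subseteq T$); for each \emph{candidate} $A \in \mathcal{C}_i := \{A \in \mathcal{F}_i : |A \cup T| \le k\}$ we branch by recursing on $(T \cup A,\, s - |A \setminus T|)$. If $\mathcal{C}_i = \emptyset$, this branch fails. Note that by the $k$-wide property, at most one $A \in \mathcal{F}_i$ ever satisfies $A \subseteq T$, so ``unsatisfied'' is well-defined and checking it is polynomial.

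The key structural observation is that the ``new parts'' $\{A \setminus T : A \in \mathcal{C}_i\}$ form an $s$-wide family of subsets of $U \setminus T$, each of size at most $s$. Indeed, for two distinct candidates $A, A'$ we have $|A \cup A'| > k$ while $A \cup A' \subseteq T \cup (A \setminus T) \cup (A' \setminus T)$, forcing $|A \setminus T| + |A' \setminus T| > s$. In particular, at most one candidate has new part of size at most $s/2$. This lets us split branching steps into \emph{small} steps (committing to this unique candidate; single branch, $s$ decreases by between $1$ and $s/2$) and \emph{large} steps (committing to a candidate with $|A \setminus T| > s/2$; $s$ at least halves). Along any root-to-leaf path there can be at most $\Oh(\log k)$ large steps; small steps contribute a factor of $1$ and do not multiply the leaf count.

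The remaining and main task is to bound the branching factor of a large step by $2^{\Oh(k)}$, which we achieve via a sunflower-style kernelization of $\mathcal{F}_i^{\le k}$. This is the main obstacle: unlike in $d$-\textsc{Hitting Set}, sunflower petals cannot be freely removed from a $k$-wide family since different petals may be chosen in different optima. We address this by an exchange argument: whenever the new parts of candidates contain a sunflower with more than $k$ petals, the $k$-wide inequality $|Y| + |P_j| + |P_{j'}| > s + |T \cap (Y \cup P_j \cup P_{j'})|$ forces the petals to have roughly comparable sizes, and any feasible $T$ using one petal can be locally adjusted (by swapping petals while keeping the core) to use a designated representative petal without increasing $|T|$. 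Applying the classical sunflower lemma, this reduces each family to at most $s!(s+1)^s = 2^{\Oh(k)}$ representative candidates per branching step. Combined with the $\Oh(\log k)$ halving steps, the branching tree has at most $(2^{\Oh(k)})^{\Oh(\log k)} = 2^{\Oh(k \log k)}$ leaves, and each leaf does $N^{\Oh(1)}$ work, yielding the claimed $2^{\Oh(k \log k)} \cdot N^{\Oh(1)}$ running time.
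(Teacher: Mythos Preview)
Your light/heavy split and the $\Oh(\log k)$ bound on the number of heavy steps are exactly what the paper uses. The gap is in your bound on the branching factor of a heavy step: the sunflower exchange argument is not correct as stated, and in fact fails on simple examples.

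Take $Y\subseteq U$ with $|Y|=k-1$, distinct elements $p_1,\dots,p_{k+2}\notin Y$, and set $\mathcal{F}_1=\{Y\cup\{p_j\}:1\le j\le k+2\}$, $\mathcal{F}_2=\{\{p_{k+2}\}\}$. Both families are $k$-wide, and the unique solution is $S_1=Y\cup\{p_{k+2}\}$, $S_2=\{p_{k+2}\}$ with $|S_1\cup S_2|=k$. Processing $\mathcal{F}_1$ first with $T=\emptyset$, the new parts form a sunflower with core $Y$ and $k+2$ singleton petals $\{p_j\}$. Your proposed swap replaces $S_1$ by some representative $Y\cup\{p_{j^*}\}$ with $j^*\ne k+2$; the resulting union is $Y\cup\{p_{j^*},p_{k+2}\}$ of size $k+1$, so feasibility is lost. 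The claim that the $k$-wide inequality ``forces the petals to have roughly comparable sizes'' is also vacuous here (all petals have size~$1$) and in general only gives the lower bound $|P_j|+|P_{j'}|>s-|Y|$, which says nothing about comparability.

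The paper sidesteps this entirely with a one-line preprocessing: guess the \emph{largest} set $X=S_{i_0}$ in the solution (a polynomial factor), then discard from every family any $A$ with $|A|>|X|$ or $|A\cup X|>k$. The point is that for two surviving sets $A,B$ in a $k$-wide family with $|A|,|B|\le|X|$ and $|A\cup X|,|B\cup X|\le k$, one must have $A\cap X\ne B\cap X$ (otherwise $|A\cup B|\le|A\cup X|\le k$), so each family retains at most $2^{|X|}\le 2^k$ sets. This directly gives branching factor $2^k$ at every step, and combined with your $\Oh(\log k)$ heavy steps yields $(2^k)^{\Oh(\log k)}=2^{\Oh(k\log k)}$ without any exchange argument.
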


We then show that there is a polynomial-time algorithm if $|\mathcal{F}_i|$ is bounded for every $i$.
This corresponds to the case when the graph of the construction has bounded degree.

\begin{restatable}{theorem}{thmwfhbounddeg}
\label{the:wft-bound-deg}
Let $|\mathcal{F}_i| \le d$ for every $i$.
Then there is a $k^{\Oh(\log d)} N^{\Oh(1)}$ time randomized algorithm for {\sc Wide Family Hitting}.
\end{restatable}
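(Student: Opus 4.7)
The plan is to design a randomized search-tree algorithm with branching factor $d$ and recursion depth $\Oh(\log k)$, so that a single run takes time $d^{\Oh(\log k)} N^{\Oh(1)} = k^{\Oh(\log d)} N^{\Oh(1)}$, and to amplify the success probability to a constant by $k^{\Oh(\log d)}$ independent repetitions. Two structural observations drive the approach. First, by $k$-wideness, any $T \subseteq U$ with $|T| \le k$ contains at most one set from each family $\mathcal{F}_i$, so the task reduces to finding $T^* \subseteq U$ of size at most $k$ such that every $\mathcal{F}_i$ contains some $S \subseteq T^*$. Second, each $\mathcal{F}_i$ contains at most one set of size $\le k/2$: if two such sets $A, B$ existed, then $|A \cup B| \le k$, contradicting $k$-wideness. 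In particular, among the at most $d$ feasible candidates of a family, at most one is ``small'' relative to the current remaining budget.

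Concretely, the algorithm maintains a partial commitment $Q \subseteq U$ of elements forced to lie in $T^*$, with remaining budget $k - |Q|$. At each internal node it selects (with suitable randomization) an unsatisfied family $\mathcal{F}_i$ -- one with no $S \in \mathcal{F}_i$ contained in $Q$ -- and branches over the at most $d$ sets $S \in \mathcal{F}_i$ of size at most $k - |Q|$, recursing with $Q \gets Q \cup S$ in each branch. A leaf where $Q$ covers all families and $|Q| \le k$ returns $Q$; leaves with $|Q| > k$ fail. By the ``at most one small set'' observation, at each branching step at least $d-1$ of the $d$ branches commit a set of size $> (k - |Q|)/2$, each of which halves the remaining budget. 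Hence along any root-to-leaf path that mostly follows these ``large'' branches, the depth is $\Oh(\log k)$, as required.

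The main obstacle is that the path corresponding to the \emph{optimal} solution need not follow large branches: $T^*$ could consist of many small (even singleton) sets $S_i^*$, one per family, so the correct commit at every level might be small. I would defeat this by weighting the random choice of branching family (and possibly randomly coloring the elements of $U$) so that families whose optimal commit is small are chosen only rarely, and by an amortized/charging argument that balances each small commit against progress made elsewhere. Because $k$-wideness permits at most one small set per family, and because the distinct committed sets must altogether fit into the $k$-element set $T^*$, one can bound the expected depth of the correct path by $\Oh(\log k)$ and the success probability of a single run by $\Omega(k^{-\Oh(\log d)})$; repeating $k^{\Oh(\log d)}$ times amplifies this to a constant, completing the proof.
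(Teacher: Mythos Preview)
Your proposal correctly isolates the central structural fact---that each family contains at most one \emph{light} set with respect to the current partial solution---but it stops short of the idea that actually makes the argument go through. You acknowledge the obstacle yourself: along the path corresponding to the optimal solution, the correct commit may be the light set at \emph{every} step, so the depth can be $\Theta(k)$ rather than $\Oh(\log k)$, and a search tree of branching factor $d$ and depth $k$ is useless. Your proposed fix, ``weighting the random choice of branching family (and possibly randomly coloring the elements of $U$) so that families whose optimal commit is small are chosen only rarely,'' is not an algorithm: the algorithm does not know which family has a small optimal commit, and you give no mechanism by which random coloring or a charging argument would achieve the claimed $\Omega(k^{-\Oh(\log d)})$ success probability. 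As written, this is a hope rather than a proof.

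The paper's proof sidesteps the obstacle differently and concretely. It does not branch at all; it makes a single randomized linear pass through $\mathcal{F}_1,\dots,\mathcal{F}_m$, and at each step samples \emph{one} set from the current family with a carefully chosen nonuniform distribution: if the light set has cost $c$ (i.e., adds $c$ new elements) and the remaining budget is $b$, it picks the light set with probability $(b-c)/b$ and each heavy set with probability $c/(bd)$. The point of these weights is that the probability of the correct choice exactly offsets the budget it fails to consume. If the correct set is light, it is chosen with the large probability $(b-c)/b$, but the budget only drops to $b-c$; if it is heavy, it is chosen with the small probability $c/(bd)$, but the budget drops below $c\le b/2$, gaining a factor of $d$ in the potential $(1/d)^{1+\log_2 b}$. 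An induction on $b$ then shows the success probability is at least $\frac{1}{2b}(1/d)^{1+\log_2 b}$, which yields the claimed $k^{\Oh(\log d)}$ after repetition. This probability assignment is the missing ingredient in your proposal.
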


The rest of the paper is organized as follows.
In \Cref{sec:prelim} we give formal definitions and preliminary results.
In \Cref{sec:min} we give the algorithm of \Cref{thm:min-cycle}.
In \Cref{section:lowerbound} we show the lower bounds \Cref{thm:lowe:bound:gen} and \Cref{the:noquasip}.
In \Cref{sec:wfhp} we prove \Cref{the:wfh-fpt} and \Cref{the:wft-bound-deg}.
We then conclude in \Cref{sec:concl}, in particular discussing open problems related to  \textsc{Hedge Minimum Cycle} and \textsc{Wide Family Hitting}.

\section{Preliminaries}\label{sec:prelim} 

In this section, we introduce basic notation used throughout the paper. 
%

We use standard graph-theoretic terminology and refer to the textbook of Diestel~\cite{Diestel12} for missing notions. 
 We consider only finite  graphs, and the considered graphs are assumed to be undirected if it is not explicitly said to be otherwise.  For  a graph $G$, we use  $V(G)$ and $E(G)$ to denote its vertex and edge set, respectively. 
 Throughout the paper we use $n=|V(G)|=|G|$ and $m=|E(G)|$.
 For a graph $G$ and a subset $X\subseteq V(G)$ of vertices, we write $G[X]$ to denote the subgraph of $G$ induced by $X$. 
For a vertex $v$, we denote by $N_G(v)$ the \emph{(open) neighborhood} of $v$, i.e., the set of vertices that are adjacent to $v$ in $G$. For $X\subseteq V(G)$, $N_G(X)=\big(\bigcup_{v\in X}N_G(v)\big)\setminus X$.
 The \emph{degree} of a vertex $v$ is $d_G(v)=|N_G(v)|$. We may omit subscripts if the considered graph is clear from a context.

A \emph{path} $P$ in $G$ is a subgraph of $G$ with $V(P)=\{v_0,\ldots,v_\ell\}$ and $E(P)=\{v_{i-1}v_i\mid1\leq i\leq \ell\}$.  
We write  $v_0v_1\cdots v_\ell$ to denote $P$; the vertices $v_0$ and $v_\ell$ are \emph{end-vertices} of $P$,  the vertices $v_1,\ldots,v_{\ell-1}$ are \emph{internal}, and $\ell$ is the \emph{length} of $P$. 
For a path $P$ with end-vertices $s$ and $t$, we say that $P$ is an $(s,t)$-path. 
A \emph{cycle} is a graph $C$ with $V(C)=\{v_1,\ldots,v_\ell\}$ for $\ell\geq 3$ and $E(C)=\{v_{i-1}v_i\mid1\leq i\leq \ell\}$, where we assume that $v_0=v_\ell$. We write $C=v_1\cdots v_\ell$ to denote a cycle in $G$.

\begin{definition}\label{def:submod}
 Given a finite set $U$, a function $f\colon 2^U\rightarrow \mathbb{R}$ is \emph{submodular} if for every $X,Y\subseteq U$,
 \[
 f(X)+f(Y)\geq f(X\cup Y)+f(X\cap Y). 
\]  
\end{definition}
We also will use an equivalent formulation of submodularity, that is, for any $X\subseteq Y$ and $v\not\in Y$, 
\[
f(X\cup \{v\})- f(X)\geq f(Y\cup \{v\})- f(Y).
\]

Throughout the paper we assume that the considered submodular functions $f\colon 2^U\rightarrow \mathbb{R}$
are given by value-giving oracles returning the value $f(X)$ for every $X\subseteq U$ in unit time.
We also assume the \emph{real RAM} computational model for operations with the values of considered functions, i.e., we assume that basic arithmetic operations over real numbers are performed in unit time. 
In this paper, we consider functions $f$ defined on subsets of the vertex or edge set of a graph. Slightly abusing notation, we may write $f(H)$ instead of $f(V(H))$ or $f(E(H))$ for a subgraph $H$ of $G$.

A submodular function is \emph{monotone} if for every $X\subseteq Y\subseteq U$, $f(X)\leq f(Y)$. We note that it is well-known that a rank function of a matroid is a monotone submodular function with nonnegative integer values.

 \section{PTAS for shortest cycles with monotone submodular costs}\label{sec:min}
 In this section, we demonstrate a PTAS for finding a shortest cycle with nonnegative monotone submodular costs. 
  If a connected component of a graph $G$ is a tree, it does not contain any cycle. In this case, the problem of finding a cycle in this component is meaningless. From now on, we assume that all connected components of graphs considered throughout the section  contain cycles. 
For a graph  $G$ and a function  $f\colon 2^{V(G)}\rightarrow \mathbb{R}_{\geq 0}$, we 
define   \[\opt(G,f)=\min\{f(C)\mid C\subseteq V(G)\text{ induces a cycle of }G\};\] 
 we write $\opt$ instead of $\opt(G,f)$ if $G$ and $f$ are clear from the contexts. 
 
First, we show that the problem admits a factor-2 approximation. Besides an approximate solution, our algorithm computes a family of induced tree-subgraphs rooted in the vertices of $G$ that will be crucial for PTAS. 
 
 \begin{lemma}\label{lem:const-fact} 
 There is an algorithm $\mathcal{A}$ that, given a graph $G$ and a monotone submodular function $f\colon 2^{V(G)}\rightarrow \mathbb{R}_{\geq 0}$,
 in time $\Oh(n(m+n\log n))$ finds a cycle $C$ with $f(C)\leq 2\opt$. Furthermore, the algorithm returns a family of induced tree-subgraphs
 $\cT_f=\{T_f(v)\}_{v\in V(G)}$ in $G$ such that for every $v\in V(G)$, (i) $v\in V(T_f(v))$ and (ii) for every $x\in V(T_f(v))$ and $y\in N_G(x)\setminus V(T_f(v))$, $f(Py)\geq \opt/2$, where $P$ is the unique $(v,x)$-path in $T_f(v)$. 
 \end{lemma}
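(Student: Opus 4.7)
The plan is to adapt Dijkstra's algorithm to monotone submodular vertex costs. For each root $v \in V(G)$, I grow a tree $T_v$ greedily by maintaining a priority queue of \emph{extensions} $(x,y)$, where $x$ is already in the current tree and $y \in N_G(x)$ is not, with priority $f(V(P^T_x) \cup \{y\})$ (here $P^T_x$ is the unique $(v,x)$-path in $T_v$). Repeatedly I pop the minimum-priority extension, add $y$ to $T_v$ as a child of $x$ with $d_T(y) := f(V(P^T_x) \cup \{y\})$, and insert new extensions out of $y$. Each time $y$ is added, every other tree-neighbor $z$ of $y$ produces a \emph{back edge} $yz$; together with the tree paths from $v$ to $y$ and from $v$ to $z$ it induces a simple cycle through the lowest common ancestor of $y$ and $z$, whose $f$-cost is at most $f(V(P^T_y) \cup V(P^T_z)) \le d_T(y) + d_T(z)$ by subadditivity of nonnegative monotone submodular functions. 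I record this upper bound, and after processing every root I return the simple cycle attached to the smallest recorded upper bound.

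To define $T_f(v)$ satisfying (ii), let $w$ be the $f$-cost of this returned cycle; then $\opt \le w \le 2\opt$, so $\tau := w/2 \ge \opt/2$. I set $T_f(v)$ to be the subtree of $T_v$ induced by $\{x \colon d_T(x) < \tau\}$; as a prefix of the greedy order, this is a tree containing $v$, giving (i). For (ii), fix $x \in V(T_f(v))$ and $y \in N_G(x) \setminus V(T_f(v))$. Since the component of $v$ contains cycles, the Dijkstra run spans it, so $y \in V(T_v)$ with $d_T(y) \ge \tau$. The extension $(x,y)$ has been in the queue since $x$ joined $T_v$, so the greedy minimum-extraction rule guarantees $f(V(P^T_x) \cup \{y\}) \ge d_T(y) \ge \tau \ge \opt/2$, the required bound.

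For the approximation guarantee, let $C^* = v_0 v_1 \cdots v_{k-1} v_0$ be an optimal cycle with $f(C^*) = \opt$ and consider the run rooted at $v_0$. The key claim is that $d_T(v_i) \le \opt$ for every $v_i \in V(C^*)$. Granting this, since $C^*$ has $k$ edges on $k$ vertices whereas the restriction of $T_{v_0}$ to $V(C^*)$ is a forest with at most $k-1$ edges of $C^*$, at least one edge $v_j v_{j+1}$ of $C^*$ is a back edge in $T_{v_0}$; the associated candidate cycle has $f$-cost at most $d_T(v_j) + d_T(v_{j+1}) \le 2\opt$ and therefore wins (or ties) the final minimization.

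The hard part is proving $d_T(v_i) \le \opt$: greedy submodular Dijkstra cannot compute exact minimum-cost paths in general (otherwise \textsc{Submodular Shortest $(s,t)$-Path} would be polynomial-time solvable, contradicting the lower bound cited in the introduction), so the argument has to exploit the cyclic structure of $C^*$ rather than appeal to a ``best-path'' invariant. Whenever $V(C^*) \not\subseteq V(T_{v_0})$, there is a frontier edge $v_a v_{a+1}$ of $C^*$ with $v_a$ in the tree and $v_{a+1}$ outside, making the extension $(v_a, v_{a+1})$ available; a careful induction using monotonicity of $f$ (so that any subset of $V(C^*)$ has $f$-value at most $\opt$) together with submodular marginal inequalities is needed to show this extension's cost is at most $\opt$, whence the greedy minimum at that step is also at most $\opt$. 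The overall running time, dominated by $n$ runs of Fibonacci-heap Dijkstra at $O(m + n\log n)$ each together with $O(m)$ oracle queries and back-edge bookkeeping per run plus a single final cycle reconstruction, is $O(n(m + n\log n))$.
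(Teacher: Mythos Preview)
Your algorithm is essentially correct, but the correctness argument has a genuine gap precisely where you flag it: the claim that $d_T(v_i)\le\opt$ for every $v_i\in V(C^*)$. Your suggested hand-wave does not work. The cost of the frontier extension $(v_a,v_{a+1})$ is $f\bigl(V(P^T_{v_a})\cup\{v_{a+1}\}\bigr)$, and $V(P^T_{v_a})$ is the \emph{tree} path to $v_a$, which need not lie inside $V(C^*)$ at all; so monotonicity (``any subset of $V(C^*)$ has $f$-value at most $\opt$'') says nothing about it, and the obvious submodular marginal bound goes the wrong way for the induction you sketch. You yourself note that submodular Dijkstra cannot in general compute exact shortest paths, and once the growing tree has left $V(C^*)$ there is no evident mechanism forcing the extension cost back below $\opt$.

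The paper avoids this difficulty entirely by a different (and simpler) argument. For each root $v$ it defines $g(x)=\min\{f(P):P\text{ a }(v,x)\text{-path}\}$, lets $G_h$ be the induced subgraph on $\{x:g(x)\le h\}$, and sets $h^*$ to be the least $h$ for which $G_h$ contains a cycle. The key structural observation is that for every $h<h^*$ the graph $G_h$ is an induced \emph{tree}; hence each vertex in it has a \emph{unique} $(v,x)$-path, and Dijkstra is forced to find it exactly, yielding $d(x)=g(x)$ throughout this regime. The algorithm halts the first time it pops a vertex $x$ with a non-parent neighbour $y$ satisfying $d(y)\le d(x)$; one then has $d(x)=h^*$, and the cycle inside $G[V(P_x)\cup V(P_y)]$ has cost at most $d(x)+d(y)\le 2h^*$. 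Finally, if $v\in V(C^*)$ then every vertex of $C^*$ has $g$-value at most $f(C^*)=\opt$, so $G_{\opt}$ already contains the cycle $C^*$ and hence $h^*\le\opt$. This gives $f(C_v)\le 2h^*\le 2\opt$ directly, with $T_f(v)$ taken to be the induced tree $\{z:d(z)<h^*\}$ --- no claim about $d_T(v_i)$ for individual cycle vertices is ever needed. Your algorithm in fact records this same first back edge among its candidates, so it, too, outputs a cycle of cost $\le 2\opt$; but to justify that you should replace your back-edge-on-$C^*$ argument with the $h^*$ argument above.
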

 \begin{proof}
Our algorithm is based on the classical Dijkstra's algorithm for finding shortest paths~\cite{Dijkstra59}. Let $v\in V(G)$. The algorithm constructs a tree rooted in $v$ by assigning labels $p(x)$ for  vertices  $x\in V(G)$, where $p(x)$ is the parent of $x$ in the tree; initially $p(v)=v$ and $p(x)$ is empty for every $x\in V(G)$ distinct from $v$. For $x\in V(G)$ with nonempty $p(x)$, we use $P_x$ to denote  the unique $(v,x)$-path defined by these labels. We also assign labels $d(x)$ for $x\in V(G)$, where $d(x)=f(P_x)$ if $p(x)$ is nonempty. Then the following subroutine computes a cycle $C_v$ associated with $v$ and $T_f(v)$ defined by the set of vertices given together with their labels $p(x)$. 

\medskip
\begin{algorithm}[H]
\KwIn{A graph $G$ with $v\in V(G)$ and a function $f$.}
\KwResult{A cycle $C_v$ and a tree $T_f(v)$.}
\Begin
{ 
set $S:=V(G)$, $p(v):=v$, $d(v):=f(v)$\;
\ForEach{$x\in V(G)\setminus\{v\}$}
{set $p(x):=\emptyset$ and $d(x):=+\infty$
}  
\While{$S\neq\emptyset$}
{
find $x\in S$ s.t. $d(x)=\min\{d(y)\colon y\in S\}$ and set $S:=S\setminus\{x\}$\;
\eIf{there is $y\in N_G(x)\setminus\{p(x)\}$ with $d(y)\leq d(x)$ \label{ln:cycle}}{
find a cycle $C_v$ in $G[V(P_x)\cup V(P_y)]$ and output $C_v$\;
output $T_f(v)$ with the set of vertices $\{z\in V(G)\colon d(z)< d(x)\}$\label{ln:tree}\;
\textbf{quit}
}
{\ForEach{$y\in N_G(x)\setminus\{p(x)\}$ with $d(y)>f(P_xy)$}
{set $d(y):=f(P_xy)$ and $p(y):=x$}
}
}
}
\caption{$\textsc{Cycle}(G,v,f)$} \label{alg:min-rank}
\end{algorithm}

\medskip
To analyze the algorithm, denote by $g(x)=\min\{f(P)\colon P\text{ is a }(v,x)\text{-path in }G\}$ for every $x\in V(G)$. Clearly, $g(x)\leq d(x)$ for $x\in V(G)$. 
For a real number $h\geq f(v)$, let $G_h$ be the subgraph of $G$ induced by the set of vertices $\{x\in V(G)\colon g(x)\leq h\}$.
Let $h^*\geq f(v)$ be the minimum number such that $G_{h^*}$ contains a cycle. Notice that such a number exists, because the connected component of $G$ containing $v$ is not a tree. 
Note also that for every $h<h^*$, $G_h$ is a tree.
The crucial observation is that the algorithm assigns the labels $d(x)=g(x)$ for $x\in V(G_h)$ if $h<h^*$ and the labels $p(x)$ define the induced tree $G_h$. Furthermore, the algorithm stops 
in line~(\ref{ln:cycle}), where $d(x)=g(x)=h^*$ and $d(y)=g(y)\leq h^*$. Because $xy\in E(G)$ and $y\neq p(x)$, the graph $G[V(P_x)\cup V(P_y)]$ contains a cycle $C_v$. Because $f(P_x)=g(x)$ and $f(P_y)=g(y)$, we have that $f(C_v)\leq 2h^*$. Since $d(x)=h^*$, we have that $T_f(v)$ constructed in line~(\ref{ln:tree}) is an induced tree in $G$. 

Clearly, $v\in V(T_f(v))$, and condition (i) for $T_f(v)$ is fulfilled. 
By definition, $f(C_v)\geq \opt$. Hence, $h^*\geq \opt/2$. If there are $x\in V(T_f(v))$ and $y\in N_G(v)\setminus V(T_f(v))$ such that $f(P_xy)<\opt/2$, then $g(y)<h^*$ and $y$ should be in $T_f(v)$. This implies that (ii) holds.

We run $\textsc{Cycle}(G,v,f)$ for all $v\in V(G)$ and construct $\cT_f=\{T_f(v)\}_{v\in V(G)}$. To find $C$, we consider the cycles $C_v$ for $v\in V(G)$ and select a cycle $C$ of minimum cost. To show that $f(C)\leq 2\opt$, consider $v\in V(C)$. 
Then $C$ contains a $(v,y)$-path $P=P_xy$, where $x\in V(T_f(v))$ and $y$ is adjacent to $x$. Then $f(P)\geq h^*$ for $h^*$ defined for this vertex $v$. Because $C=C_v$ and $f(C_v)\leq 2h^*$, $f(C)\leq 2\opt$.

To evaluate the running time, note that Dijkstra's algorithm can be implemented to run in $\Oh(m+n\log n)$ time by the results of Fredman and Tarjan~\cite{FredmanT84}. Using exactly the same approach, we conclude that for each $v\in V(G)$,  $\textsc{Cycle}(G,v,f)$ can be implemented to run in $\Oh(m+n\log n)$ time. Since the algorithm is called for every $v\in V(G)$, the total running time is $\Oh(n(m+n\log n))$. This concludes the proof. 
\end{proof}
 
Let  $\cT=\{T(v)\}_{v\in V(G)}$ be a family of induced tree-subgraphs in a graph $G$ such that  $v\in V(T(v))$ for every $v\in V(G)$. 
For $v\in V(G)$, we define the family of paths 
\begin{equation}\label{eq:paths}
\cP(v)=\{Py\colon P\text{ is a }(v,x)\text{-path for }x\in V(T(v))\text{ and }y\in N_G(x)\setminus V(T(v))\}, 
\end{equation}
and set $\cP(\cT)=\bigcup_{v\in V(G)}\cP(v)$. We use the following easy property of these paths. 
 
 \begin{lemma}\label{lem:branch-path}
 Let $\cP(\cT)$ be the family of paths constructed for $\cT=\{T(v)\}_{v\in V(G)}$. Then for every cycle $C$, there is a path $P\in \cP$ such that $P$ is a segment of $C$. Furthermore, $|\cP(\cT)|\leq nm$ and the sets of vertices of the paths of $\cP(\cT)$ can be listed in $\Oh(n^2m)$ time.
  \end{lemma}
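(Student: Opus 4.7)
The plan is to prove the three assertions separately, starting with the structural one since it drives the size and time bounds.

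For the segment claim, fix a cycle $C$ and any vertex $v\in V(C)$. Since $T(v)$ is a tree while $C$ is a cycle, $C$ cannot be entirely contained in $T(v)$, so walking along $C$ starting from $v$ in some direction eventually leaves $V(T(v))$. Let $y$ be the first vertex of this walk that lies outside $V(T(v))$, and let $x$ be the vertex of $C$ immediately before $y$, so that $x\in V(T(v))$ and $xy\in E(C)\subseteq E(G)$. Denote by $Q$ the initial subpath of $C$ from $v$ to $x$. Every vertex of $Q$ lies in $V(T(v))$ by the choice of $y$, and all consecutive pairs on $Q$ are edges of $G$ between vertices of $V(T(v))$; since $T(v)$ is an \emph{induced} subgraph of $G$, each such edge belongs to $E(T(v))$. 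As $Q$ inherits distinct vertices from $C$, it is a $(v,x)$-path in the tree $T(v)$, and by uniqueness this must be the $(v,x)$-path $P$ used in the definition~\eqref{eq:paths}. Therefore $Py$ belongs to $\cP(v)\subseteq \cP(\cT)$ and is a segment of $C$.

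For the cardinality bound, fix $v\in V(G)$. By~\eqref{eq:paths}, every element of $\cP(v)$ is determined by its last edge $xy$ with $x\in V(T(v))$ and $y\in N_G(x)\setminus V(T(v))$, because the remaining prefix is the unique $(v,x)$-path in the tree $T(v)$. Thus $|\cP(v)|$ is at most the number of edges of $G$ incident to $V(T(v))$, which is at most $m$. Summing over the $n$ choices of $v$ yields $|\cP(\cT)|\leq nm$.

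For the running time, for each $v\in V(G)$ we first store parent pointers of $T(v)$ (this is given with $\cT$). We then scan the edges of $G$ to identify the boundary edges $xy$ with $x\in V(T(v))$, $y\notin V(T(v))$; for each such edge we reconstruct $V(P)$ by tracing parent pointers in $T(v)$ from $x$ to $v$ and appending $y$, in $\Oh(n)$ time. Since there are at most $m$ boundary edges per $v$, one vertex contributes $\Oh(nm)$ time, and the total is $\Oh(n^2m)$.

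The only delicate step is the segment claim: one must invoke both that $T(v)$ is \emph{induced} (to ensure the initial subpath's edges actually lie in $T(v)$) and that $C$ is a simple cycle (to ensure the subpath is simple, hence the unique tree path). Everything else is a direct count from~\eqref{eq:paths}.
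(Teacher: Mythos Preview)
Your proof is correct and follows essentially the same approach as the paper's: both pick $v\in V(C)$, use that $T(v)$ is an induced tree to extract an initial segment of $C$ ending at the first exit vertex, and bound $|\cP(v)|$ by the number of boundary edges. Your write-up is in fact more explicit than the paper's---you spell out why the walk segment must coincide with the unique tree path (invoking inducedness and simplicity of $C$), which the paper leaves implicit.
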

 
 \begin{proof}
 Let $\cP(\cT)=\bigcup_{v\in V(G)}\cP(v)$, where $\cP(v)$ is defined as in (\ref{eq:paths}).
  Consider a vertex $v\in V(C)$. Because $T(v)$ is an induced tree in $G$, $C$ contains a path $Py$, where $P$ is a $(v,x)$-path $P$ in $T(v)$ and $y\in N_G(x)\setminus V(T(v))$. By definition, $Py\in\cP(v)$. This proves that $C$ contains as a segment a path from $\cP(\cT)$. Since 
 every vertex $y\in V(G)\setminus V(T(v))$ has at most $\deg_G(y)$ neighbors in $T(v)$, the number of paths in  $\cP(v)$ does not exceed $m$. Hence,  $|\cP(\cT)|\leq nm$.
 To list the set of vertices of the paths of $\cP(v)$, we consider every vertex $y\in V(G)\setminus V(T(v))$ and for each neighbor $x$ in $T(v)$, we trace the unique $(x,v)$-path with at most $n$ vertices. Therefore,
 the sets of vertices of the paths of $\cP(\cT)$ can listed in $\Oh(n^2m)$ time. 
 \end{proof}
 
 We are ready to prove \Cref{thm:min-cycle}, which we restate here.
  
  \thmmaincycle*
 
\begin{proof}
The rough idea is that we construct a recursive branching algorithm using \Cref{lem:const-fact} and \Cref{lem:branch-path}. In particular, the algorithm from \Cref{lem:const-fact} constructs a family of induced trees $\cT_f$. Then by \Cref{lem:branch-path}, a solution cycle $C$ should contain some path $P\in\cP(\cT_f)$ as a segment. We branch on these paths. However, instead of looking for a cycle containing $P$, we simply redefine the function by setting 
 $g(X)=f(X\cup V(P))-f(P)$ for each  $X\subseteq V(G)$ using the property that for any cycle $C$, $f(C)\leq f(V(C)\cup V(P))=g(C)+f(P)$ and $f(C)=g(C)+f(P)$ if $V(P)\subseteq V(C)$.
 Then we solve  the problem recursively for the new function.  Because $f(P)\geq\opt/2$ by \Cref{lem:const-fact}, we require a logarithmic in $1/\varepsilon$ depth of the search tree before we can apply a 2-approximation from \Cref{lem:const-fact} to obtain a factor-$(1+\varepsilon)$ approximation.   
 
To describe the algorithm formally, we construct the subroutine $\textsc{Find-Cycle}(G,g,k)$, which 
takes as its input $G$, a monotone submodular  function $g\colon 2^{V(G)}\rightarrow \mathbb{R}_{\geq 0}$,  and an integer $k\geq 0$.
The subroutine returns  a cycle $C$ of $G$ with $g(C)\leq\big(1+\frac{1}{2^k}\big)\opt(G,g)$.
Initially, $g:=f$.  The parameter $k$ defines the depth of recursion and is initially set to $k:=\lceil\log 1/\varepsilon\rceil$. 
To solve the problem for our original instance, we call $\textsc{Find-Cycle}(G,f,\lceil\log 1/\varepsilon\rceil)$.
Recall that we use  $\mathcal{A}$ to denote the algorithm from \Cref{lem:const-fact}.

\medskip
\begin{algorithm}[H]
\KwIn{A graph $G$, function $g$, and  $k\geq 0$.}
\KwResult{A cycle $C$ of $G$.}
\Begin
{ 
call $\mathcal{A}(G,g)$ to obtain a cycle $C$ and a family of subtrees $\cT_g$\label{ln:start}\;
\If {$g(C)>0$ and $k>0$}
{ 
construct $\cP=\cP(\cT_g)$\label{ln:p}\;
\ForEach{$P\in\cP$ \label{ln:bloop}}
{
set $g'(X):=g(V(P)\cup X)-g(P)$ for $X\subseteq V(G)$ \label{ln:gprime}\;
call $\textsc{Find-Cycle}(G,g',k-1)$ to find a cycle $C'$ \label{ln:rec}\;
\If{$g(C')<g(C)$ \label{ln:if}}{set $C:=C'$ \label{ln:if-two}}
}\label{ln:eloop}
}
return $C$
}
\caption{$\textsc{Find-Cycle}(G,g,k)$} \label{alg:min-rank2}
\end{algorithm}

\medskip
To show correctness, note that if $g\colon 2^{V(G)}\rightarrow \mathbb{R}_{\geq 2}$ is a monotone submodular function, then 
 each function $g'$ introduced in line (\ref{ln:gprime}) is also a  monotone submodular function with nonnegative values, that is, the input $\textsc{Find-Cycle}(G,g',k-1)$ in line (\ref{ln:rec}) is feasible. Further,  $\textsc{Find-Cycle}(G,g,k)$ is finite, because the depth of the recursion is upper bounded by $k$. Also the subroutine algorithm always returns some cycle of $G$ because $G$ is distinct from a forest by our assumption. Hence, to prove the correctness of $\textsc{Find-Cycle}(G,g,k)$, we have to show that it returns a cycle $C$ with $g(C)\leq\big(1+\frac{1}{2^k}\big)\opt(G,g)$.
We show this  by induction on $k$.   

If $k=0$, then the algorithm returns the cycle $C$ produced by $\mathcal{A}(G,g)$ and, therefore, $g(C)\leq 2\opt(G,g)=\big(1+\frac{1}{2^k}\big)\opt(G,g)$. 
Let $k>0$ and assume that  $\textsc{Find-Cycle}(G,g',k-1)$ called in line (\ref{ln:rec}) outputs $C'$ with $g'(C')\leq \big(1+\frac{1}{2^{k-1}}\big)\opt(G,g')$.

If $g(C)=0$ for the cycle $C$ constructed by $\mathcal{A}(G,g)$ in line (\ref{ln:start}), then the claim is trivial. Assume that $g(C)>0$.
Let $C^*$ be a cycle of $G$ with $g(C^*)=\opt(G,g)$. By \Cref{lem:const-fact} and \Cref{lem:branch-path}, there is $P\in \cP(\cT_g)$ such that $P$ is a segment of $C^*$ and $g(P)\geq\opt(G,g)/2$. We consider $P$ in the loop in lines (\ref{ln:bloop})-(\ref{ln:eloop}). Then, for the function $g'$ considered in line (\ref{ln:gprime}), 
\begin{equation*}
g'(C^*)=g(C^*)-g(P)\leq g(C^*)-\opt(G,g)/2=\opt(G,g)/2. 
\end{equation*}
Therefore, 
\begin{equation}\label{eq:rec}
\opt(G,g')\leq \opt(G,g)-g(P)\text{ and }\opt(G,g')\leq \opt(G,g)/2. 
\end{equation}
Let $C'$ be the cycle produced by $\textsc{Find-Cycle}(G,g',k-1)$ in line (\ref{ln:rec}).
By the inductive assumption 
\begin{equation*}
g'(C')\leq \big(1+\frac{1}{2^{k-1}}\big)\opt(G,g').
\end{equation*} 
Then by the definition of $g$ and (\ref{eq:rec}),
\begin{align*}
g(C')\leq &~ g(V(C')\cup V(P))=g'(C')+g(P)\leq \big(1+\frac{1}{2^{k-1}}\big)\opt(G,g')+g(P)\\
=&~(\opt(G,g')+g(P))+\frac{1}{2^{k-1}}\opt(G,g')\\
\leq&~\opt(G,g)+\frac{1}{2^{k}}\opt(G,g)=\big(1+\frac{1}{2^{k}}\big)\opt(G,g).
\end{align*}
By the choice of $C$ in lines (\ref{ln:if})--(\ref{ln:if-two}), the algorithm outputs a cycle $C$ with 
$g(C)\leq g(C')\leq \big(1+\frac{1}{2^{k}}\big)\opt(G,g)$. This concludes the correctness proof.

We call $\textsc{Find-Cycle}(G,f,k)$, where $k=\lceil\log 1/\varepsilon\rceil$, to solve the problem for the original instance. 
Because the algorithm outputs a cycle $C$ with 
$f(C)\leq\big(1+\frac{1}{2^k}\big)\opt(G,f)$ and  $k=\lceil\log 1/\varepsilon\rceil$, 
$f(C)\leq (1+\varepsilon)\opt(G,f)$, that is, we obtain the desired approximation. 

To evaluate the running time, note first that  we switch to the function $g$ in line (\ref{ln:gprime}). 
We can make the following easy observation about such functions. Suppose that $f_1,f_2,f_3\colon 2^{V(G)}\rightarrow \mathbb{R}_{\geq 0}$ are functions such that for every $X\subseteq V(G)$, $f_2(X)=f_1(X\cup A)-f_1(A)$ and $f_3(X)=f_2(X\cup B)-f_2(B)$ for some $A,B\subseteq V(G)$. 
Then 
\begin{align*}
f_3(X)=&f_2(X\cup B)-f_2(B)=(f_1(X\cup B\cup A)-f_1(A))-(f_1(A\cup B)-f_1(A))\\
=&f_1(X\cup (B\cup A))-f_1(A\cup B).  
\end{align*}
Using this observation iteratively,  using only the oracle for the input function $f$, the values of all other functions occurring in the algorithm could  be computed in $\Oh(n)$ time for each $X\subseteq V(G)$. 

Computing  $C$ and $\cT_g$ in line~(\ref{ln:start}) can be done in  $\Oh(n^2(m+n\log n))$ time by \Cref{lem:const-fact} taking into account that each value $g(X)$ can be computed in $\Oh(n)$ time. The construction of $\cP(\cT_g)$ can be done in $\Oh(n^2m)$ time by \Cref{lem:branch-path}. The number of paths $P$ considered in the loop in lines~(\ref{ln:bloop})--(\ref{ln:eloop}) is at most $nm$ by \Cref{lem:branch-path}. Therefore, the number of recursive calls of $\textsc{Find-Cycle}(G,g',k-1)$ in line (\ref{ln:rec}) is at most $nm\leq n^3$.
The depth of the search tree is at most $\lceil\log 1/\varepsilon \rceil$.  Therefore, the total running time is $n^{\Oh(\log 1/\varepsilon)}$. 
This concludes the proof.
\end{proof}
 
 When the function $f$ is integer-valued,  \Cref{thm:min-cycle} (by setting  $\varepsilon = \frac{1}{w +1}$ with $\opt\leq w\leq 2\opt$, where $w=f(C)$ for the cycle $C$ returned by the approximation algorithm for $\varepsilon=1/2$)     implies 
that a cycle of cost $\opt$ can be found in time $n^{\Oh(\log \opt)}$.
In particular, when 
$\opt=n^{\Oh(1)}$, we obtain a quasi-polynomial algorithm computing the cycle of minimum submodular cost. For example, this holds if $f$ is a rank function of a matroid.

\corquasip*

 
Finally in this section, we observe that our results can be easily translated for the edge version of the problem, even on multigraphs.
For monotone submodular function $f\colon 2^{E(G)}\rightarrow \mathbb{R}_{\geq 0}$, we define 
$\opt=\min\{f(C)\colon C \subseteq E(G) \text{ is a cycle of  }G\}$ in the same way as for the vertex costs.

  \begin{corollary}\label{cor:min-cycle}
Let $G$ be an $m$-edge multigraph, $\varepsilon>0$, and 
$f\colon 2^{E(G)}\rightarrow \mathbb{R}_{\geq 0}$ a monotone submodular function represented by an oracle.
Then a cycle $C$ in $G$ with $f(C)\leq (1+\varepsilon)\cdot \opt$ can be found in time $m^{\Oh(\log 1/\varepsilon)}$.
\end{corollary}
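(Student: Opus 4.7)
The plan is to reduce the edge-cost problem on the multigraph to the vertex-cost setting of Theorem \ref{thm:min-cycle} via the standard subdivision trick. Given an $m$-edge multigraph $G$ with edge-cost function $f$, I would construct a simple graph $G'$ by placing a new \emph{subdivision vertex} $v_e$ on every edge $e = uv \in E(G)$: formally $V(G') = V(G) \cup \{v_e : e \in E(G)\}$, and for each $e = uv$ we add the two edges $u v_e$ and $v_e v$ to $G'$. Parallel edges between the same endpoints $u,v$ now become internally disjoint paths of length two through distinct subdivision vertices, and loops become distinct length-two paths as well, so $G'$ is a simple graph. I would then define a cost function $f' : 2^{V(G')} \to \mathbb{R}_{\ge 0}$ by $f'(X) = f(\{e \in E(G) : v_e \in X\})$, simply ignoring the original vertices of $G$.

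Next I would verify that $f'$ is monotone submodular and that its oracle is efficient. The projection $\varphi(X) = \{e \in E(G) : v_e \in X\}$ is monotone and commutes with union and intersection, so $f' = f \circ \varphi$ inherits both monotonicity and submodularity from $f$. A query to $f'$ is answered by one query to $f$ preceded by an $O(m)$-time scan to compute $\varphi(X)$, which is negligible compared with the overall running time. The key combinatorial observation is the bijection between cycles: in $G'$ every $v_e$ has degree exactly two, so every cycle $C'$ of $G'$ alternates between original vertices and subdivision vertices, and is therefore the subdivision of a unique cycle $C$ in $G$ using exactly the edges $\{e : v_e \in V(C')\}$. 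Conversely each cycle of $G$ lifts to a cycle of $G'$, and by construction the costs agree: $f(C) = f'(C')$. Therefore $\opt(G, f) = \opt(G', f')$, and any $(1+\varepsilon)$-approximate solution in $G'$ yields a $(1+\varepsilon)$-approximate solution in $G$.

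Finally, I would apply Theorem \ref{thm:min-cycle} to $(G', f', \varepsilon)$. Since we may assume (without loss of generality, by discarding isolated vertices and tree components, which contain no cycles) that $|V(G)| \le 2m$, the number of vertices of $G'$ is $n' = |V(G)| + m = O(m)$, and the running time becomes $(n')^{O(\log 1/\varepsilon)} = m^{O(\log 1/\varepsilon)}$, as claimed. The only point that requires a little care is the multigraph convention for short cycles (digons and loops); under the standard convention that a cycle is a connected $2$-regular submultigraph, digons correspond to $4$-cycles of $G'$ and everything matches, while if the definition excludes length-two cycles then none arise as lifts of cycles in $G'$ either. No genuine difficulty is anticipated beyond this bookkeeping.
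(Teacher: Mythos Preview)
Your proposal is correct and follows essentially the same approach as the paper: subdivide every edge, define the vertex function on $G'$ by pulling back through the subdivision-vertex-to-edge map, observe that this preserves monotone submodularity and sets up a cost-preserving bijection between cycles, then invoke Theorem~\ref{thm:min-cycle}. You supply more detail than the paper (the explicit check that $\varphi$ commutes with $\cup,\cap$, the $|V(G)|\le 2m$ accounting, and the digon discussion), but the argument is the same; the only slip is the remark that loops yield a simple $G'$---subdividing a loop at $u$ produces two parallel $u\,v_e$ edges---though under the paper's cycle convention ($\ell\ge 3$) loops are irrelevant anyway.
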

\begin{proof}
We construct a graph $G'$ by subdividing each edge of $G$ once, that is, for each edge $xy\in E(G)$, we introduce a new vertex $v_{xy}$, make $v_{xy}$ adjacent to $x$ and $y$, and delete $xy$.
For a subdivision vertex $v_{xy}$, define $e(v_{xy})=xy$.  
Let $W$ be the set of subdivision vertices.
We define $g\colon 2^{V(G')}\rightarrow\mathbb{R}_{\geq 0}$ by setting 
$g(X)=f(\{e(v)\colon v\in W\cap X\})$ for each $X\subseteq V(G')$.
The definition implies that  $g\colon 2^{V(G')}\rightarrow \mathbb{R}_{\geq 0}$ is a monotone submodular function and that an oracle for $f$ can be translated into an oracle for $g$.
There is one-to-one correspondence between cycles of $G$ and $G'$, because each cycle $C'$ is obtained from a cycle $C$ of $G$ by subdividing edges and $f(C)=g(C')$.
 Therefore, we can apply \Cref{thm:min-cycle} for $G'$ and $g$.
\end{proof}

\section{Lower bound}\label{section:lowerbound}
In this section we prove the lower bounds of \Cref{thm:lowe:bound:gen} and \Cref{the:noquasip}.
Both of these lower bounds will follow from the same construction, although with different parameters.

We give the lower bounds for the setting where the function $f$ is defined on the edges of a multigraph, which then by \Cref{cor:min-cycle} translates into a lower bound when the function is defined on vertices of a graph.
In our construction the function $f$ is integer-valued.

Our lower bound is based on the following construction. 
For positive integers $k$ and $p$ we  define a multigraph \Ggp with $k+1$ vertices and $pk+1$ edges (see \Cref{fig:lbG}) and a monotone submodular function $f\colon 2^{E(G(k,p))}\rightarrow \mathbb{N}$ so that $\opt(G(k,p),f) = 2^{k+1}-1$.
Then, for each cycle $C$ of length $k+1$ of \Ggp we define a monotone submodular function $f_C \colon 2^{E(G(k,p))} \rightarrow \mathbb{N}$ so that $\opt(G(k,p),f_C) = 2^{k+1}-2$ and $f_C$ differs from $f$ only on the cycle $C$.
Deciding whether an oracle represents the function $f$ or one of the functions $f_C$ will then require querying each cycle $C$ of length $k+1$ and there are $p^k$ such cycles in $G(k,p)$.

\medskip\noindent\textbf{Construction of \Ggp.}
The multigraph \Ggp has vertex set $ \{v_1, v_2, \dots, v_{k+1}\}$.
For every pair of consecutive vertices $v_i, v_{i+1}$, $1\leq i\leq k$, there are $p$ parallel edges $F_i=\{e^1_i, \dots, e^p_i\}$ with endpoints $v_i$ and $v_{i+1}$.
One more edge $e_{k+1}$ connects $v_1$ and $v_{k+1}$, see \Cref{fig:lbG}.
In total,  \Ggp has $k+1$ vertices and $m=pk+1$ edges.
The multigraph \Ggp contains $p^k$ cycles of length~$k+1$.
Each such cycle passes through all the vertices of the multigraph in the order $v_1, v_2, \dots, v_{k+1}, v_1$.

\begin{figure} 
\begin{center}
\begin{tikzpicture}
\node[circle,draw=black,fill=white,inner sep= 2pt,minimum width = 3pt,label={below:$v_1$}] (v1) at (0,0) {};

\node[circle,draw=black,fill=white,inner sep= 2pt,minimum width = 3pt,label={below:$v_2$}] (v2) at (3,0) {};

\node[circle,draw=black,fill=white,inner sep= 2pt, minimum width = 1pt] (v3) at (6,0) {};

\node[circle,draw=black,fill=white,inner sep= 2pt,minimum width = 1pt] (vk) at (8,0) {};

\node[circle,draw=black,fill=white,inner sep= 2pt,minimum width = 1pt,label={below right:$v_{k+1}$}] (vk1) at (11,0) {};

\foreach \i in {20,35,50}{
\draw (v1) to [bend right = \i] (v2);
\draw (v1) to [bend left = \i] (v2);
\draw (v2) to [bend right = \i] (v3);
\draw (v2) to [bend left = \i] (v3);
\draw (vk) to [bend right = \i] (vk1);
\draw (vk) to [bend left = \i] (vk1);
}
\draw (v1) to [bend right = 50] node[pos=0.5,label={above:$e_{k+1}$}] {} (vk1);

\draw [thick,decorate,
    decoration = {brace}] (-0.5,-0.65) to node[pos=0.5,label={left:$p$}] {}  (-0.5,0.65);
\node[label={$\vdots$}] () at (1.5,-0.4) {};
\node[label={$\vdots$}] () at (4.5,-0.4) {};
\node[label={$\cdots$}] () at (7,-0.4) {};
\node[label={$\vdots$}] () at (9.5,-0.4) {};

\end{tikzpicture}
\caption{Construction of the multigraph \Ggp.}\label{fig:lbG}
\end{center}
\end{figure}
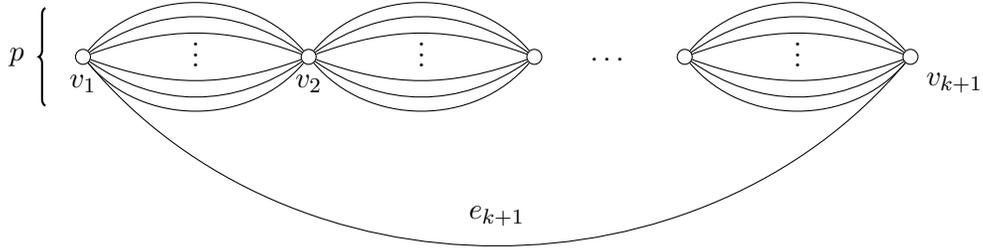

\medskip\noindent\textbf{Constructions of $f$ and $f_C$.}
We define the following function $f$ on the subsets $X$ of $E(G(k,p))$.
First, if $X \subseteq E(G(k,p))$ contains a cycle, i.e., $|X| \ge k+1$ or there is $i$ so that $|X \cap F_i| \ge 2$, we define
\[
f(X)=2^{k+1}-1.
\] 
Otherwise, i.e., $X \subseteq E(G(k,p))$ does not contain a cycle, we define
\[
f(X)=2^{k+1}- 2^{k+1-|X|},
\]
and by definition we have that $\opt(G(k,p),f) = 2^{k+1}-1$.

For a cycle $C \subseteq E(G(k,p))$ of length $|C| = k+1$, the function $f_C$ is defined as $f_C(X) = f(X)$ for $X \neq C$, and $f_C(C) = 2^{k+1}-2$.
Clearly $\opt(G(k,p),f_C) = 2^{k+1}-2$, and this optimum is given uniquely by the cycle $C$.

It is clear from the definitions that the functions $f$ and $f_C$ are monotone.
Next we establish the submodularities of $f$ and $f_C$.
Note that it is sufficient to prove that $f_C$ is submodular, as then the submodularity of $f$ follows by writing $f$ as a restriction of $f_C$ on $G(k,p+1)$.

\begin{lemma}\label{lemma:fsubmodular}
The function $f_C$ is submodular.
\end{lemma}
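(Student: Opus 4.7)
The plan is to use the equivalent characterization of submodularity via diminishing marginal returns: for every $X \subseteq Y \subseteq E(G(k,p))$ and every $v \in E(G(k,p)) \setminus Y$, I must show that
\[
\Delta_v(X) := f_C(X \cup \{v\}) - f_C(X) \;\geq\; f_C(Y \cup \{v\}) - f_C(Y) =: \Delta_v(Y).
\]
I would first compute $\Delta_v(X)$ in closed form via a case analysis on whether $X$ and $X \cup \{v\}$ contain cycles and whether either coincides with $C$. Writing $s = |X|$, five cases arise: \emph{(a)} $X$ and $X \cup \{v\}$ are both acyclic, giving $\Delta_v(X) = 2^{k-s}$; \emph{(b)} $X$ is acyclic while $X \cup \{v\}$ contains a cycle and $X \cup \{v\} \neq C$, giving $\Delta_v(X) = 2^{k+1-s} - 1$; \emph{(c)} $X$ is acyclic and $X \cup \{v\} = C$, forcing $s = k$ and $\Delta_v(X) = 0$ since both $f_C$-values equal $2^{k+1}-2$; \emph{(d)} $X = C$, giving $\Delta_v(X) = 1$; \emph{(e)} $X$ contains a cycle and $X \neq C$, giving $\Delta_v(X) = 0$ since $X \cup \{v\}$ must then also contain a cycle and be distinct from $C$ (by a size or parallel-edge count). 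All five marginals are nonnegative, which incidentally re-proves monotonicity.

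Next I verify $\Delta_v(X) \geq \Delta_v(Y)$ for each ordered pair of cases, using that $X \subseteq Y$ forces the case of $Y$ to be at least as ``cyclic'' as that of $X$. Any pair in which $Y$ lies in case (c) or (e) has $\Delta_v(Y) = 0$ and is trivial. The comparisons (a,a) and (b,b) follow immediately from $|X| \leq |Y|$ and the monotonicity of the respective closed forms. The only genuinely numerical comparison is (a,b): since $X \cup \{v\}$ is acyclic while $Y \cup \{v\}$ is not, necessarily $|Y| > |X|$, whence $2^{k+1-|Y|} - 1 \leq 2^{k-|X|}-1 < 2^{k-|X|}$. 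Pairs with $Y$ in case (d) require $\Delta_v(X) \geq 1$; in both case (a) and case (b) for $X$ the inclusion $X \subsetneq C$ forces $|X| \leq k$, so $\Delta_v(X) \geq 2^{0} = 1$ in case (a) and $\Delta_v(X) \geq 2^{1} - 1 = 1$ in case (b).

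The main obstacle is ruling out the configurations in which the $C$-modification could depress the marginal of $X$ relative to that of $Y$. When $X$ is in case (c), every $Y \supseteq X$ that is acyclic must satisfy $|Y| = |X| = k$, forcing $Y = X$; the possibility $Y = C$ is excluded because $v \in C = Y$ would contradict $v \notin Y$. So $Y$ lies in case (c) with $Y = X$ or in case (e), both giving $\Delta_v(Y) = 0 = \Delta_v(X)$. When $X$ is in case (d), $Y \supseteq X = C$ together with $v \notin Y$ yields $v \notin C$ and $Y$ cyclic; then either $Y = C$ is case (d) with $Y = X$, or $Y \supsetneq C$ falls in case (e) with $\Delta_v(Y) = 0 \leq 1 = \Delta_v(X)$. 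Once these impossibilities are identified the rest of the verification is pure bookkeeping, so I expect the difficulty to be entirely in the careful enumeration rather than in any single numerical inequality.
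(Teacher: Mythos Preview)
Your proposal is correct and follows essentially the same approach as the paper: both verify the diminishing-marginal-returns form of submodularity via a case analysis on which of the sets $X$, $X\cup\{v\}$, $Y$, $Y\cup\{v\}$ are cyclic or coincide with $C$. The only difference is organizational---you first tabulate $\Delta_v(X)$ in closed form across five structural cases and then compare pairs, whereas the paper cases directly on which of the four $f_C$-values attain the ceiling $2^{k+1}-1$; the underlying computations and the delicate checks around $C$ are the same.
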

\begin{proof}
To prove the submodularity of $f_C$, we show that for every two sets $X\subset Y \subset E(G(k,p))$ and $e \in E(G(k,p)) \setminus Y$, 
\begin{equation}\label{eq:submodularityG}
f_C(X\cup \{e\})- f_C(X)\geq f_C(Y\cup \{e\})- f_C(Y).
\end{equation}

Depending on $X$, $Y$, and $e$, we consider different cases.

\medskip\noindent\textsl{Case~1: $f_C(X\cup \{e\}) = 2^{k+1}-1$.} 
Then also $f_C(Y\cup \{e\}) = 2^{k+1}-1$, and as $f_C(X) \le f_C(Y)$ by monotonicity, \eqref{eq:submodularityG} follows.

\medskip\noindent\textsl{Case~2: $f_C(Y) = 2^{k+1}-1$.} In this case
$
f_C(Y)=f_C(Y\cup \{e\}),
$
and  \eqref{eq:submodularityG} follows by the monotonicity of $f_C$.

\medskip\noindent\textsl{Case~3: $f_C(Y\cup \{e\}) = 2^{k+1}-1$.} If either $f_C(X\cup \{e\}) = 2^{k+1}-1$ or $f_C(Y) = 2^{k+1}-1$, then we are done by the previous cases.
Otherwise, $|X \cup \{e\}| \le |Y| \le k+1$, and we consider two subcases.
 
\medskip\noindent\textsl{Subcase~3a: $X\cup \{e\} = C$.}
Then $|X|=k$ and hence $|Y| \geq k+1$.
In this case because $Y \neq X \cup \{e\}$,
$  
f_C(Y)=f_C(Y\cup \{e\})=2^{k+1}-1,
$
while 
$f_C(X)\leq f_C(X\cup \{e\})$
  by the monotonicity of $f_C$.
  
\medskip\noindent\textsl{Subcase~3b: $X\cup \{e\} \neq C$.} In this case, $X \cup \{e\}$ does not contain a cycle, and therefore we have that
 $
 f_C(X\cup \{e\})-  f_C(X)
 =2^{k+1}-2^{k+1-|X|-1}
-(2^{k+1}-2^{k+1-|X|})= 2^{k -|X|}
$
and $|X| \le k-1$.
Then, if $Y = C$, we have that $f_C(Y \cup \{e\}) - f_C(Y) = 1 \le 2^{k-|X|}$.
If $Y \neq C$, then $Y$ does not contain a cycle and we have that
$
 f_C(Y\cup \{e\})-  f_C(Y)
 =2^{k+1}-1 
-(2^{k+1}-2^{k+1-|Y|})< 2^{k+1 -|Y|}\leq 2^{k -|X|}.
$ (For the last inequality we use $|X| < |Y|$.)

\medskip\noindent\textsl{Case~4: None of the previous cases holds.} In this case $X \cup \{e\}$ does not contain a cycle, so we have that
$ f_C(X\cup \{e\})-  f_C(X)= 2^{k -|X|}$. If $Y\cup \{e\} = C$, then 
$f_C(Y\cup \{e\})-  f_C(Y)= 2^{k+1} -2 -(2^{k+1} -2^{k+1-k})=0$.
If $Y\cup \{e\} \neq C$, 
then $ f_C(Y\cup \{e\})-  f_C(Y) = 2^{k -|Y|} \leq 2^{k -|X|}$. 
\end{proof}

Now each of the functions $f_C$ and the function $f$ could be represented by the oracle, and the optimum depends on whether the function represented by the oracle is $f$ or one of the functions $f_C$.
Therefore, it remains to argue that we cannot distinguish between $f$ or one of $f_C$ in less than $p^k$ queries.

\begin{lemma}\label{lem:pumpkinsLB}
Let $g \colon 2^{E(G(k,p))} \rightarrow \mathbb{N}$ be a function represented by an oracle, with a promise that either $g = f$ or $g = f_C$ for some cycle $C$ of $G(k,p)$ of length $|C| = k+1$.
It requires at least $p^k$ queries to the oracle to determine if $g = f$.
\end{lemma}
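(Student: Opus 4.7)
The plan is a standard oracle adversary argument, exploiting the fact that $f$ and $f_C$ differ only on a single point of their domain, namely the set $C$ itself.

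First I would observe the key structural fact: for every cycle $C$ of length $k+1$ in $G(k,p)$, and for every subset $X\subseteq E(G(k,p))$ with $X\neq C$, we have $f_C(X)=f(X)$. This is immediate from the definition of $f_C$. Hence no query on a set $X\neq C$ can, by itself, rule out either of the two hypotheses ``$g=f$'' and ``$g=f_C$''.

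Next I would define the adversary. Fix any deterministic algorithm. The adversary answers every query $X$ with the value $f(X)$, i.e., it plays as if $g=f$. Let $X_1,X_2,\ldots,X_t$ be the sequence of queries issued by the algorithm, where $t<p^k$. Since $G(k,p)$ contains exactly $p^k$ cycles of length $k+1$, there exists some cycle $C^\star$ of length $k+1$ such that $C^\star\notin\{X_1,\ldots,X_t\}$. For every queried set $X_i$, the answer $f(X_i)$ equals $f_{C^\star}(X_i)$, because $X_i\neq C^\star$ and $f_{C^\star}$ agrees with $f$ off the single point $C^\star$. Therefore the transcript of answers is consistent with both $g=f$ and $g=f_{C^\star}$, and the algorithm cannot correctly decide between these two hypotheses. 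This forces $t\geq p^k$.

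For a randomized lower bound one would invoke Yao's principle, but for the deterministic bound stated in the lemma this simple adversary suffices, so I do not foresee any real obstacle. The only subtlety is being careful that the value $f(C^\star)=2^{k+1}-1$ reported by the adversary is indeed consistent with the hypothesis $g=f_{C^\star}$ when $C^\star$ itself was not queried: this is automatic since the algorithm never saw that value, so it has no evidence to distinguish $f(C^\star)=2^{k+1}-1$ from $f_{C^\star}(C^\star)=2^{k+1}-2$.
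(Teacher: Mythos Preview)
Your proposal is correct and follows essentially the same adversary argument as the paper: answer all queries according to $f$, observe that fewer than $p^k$ queries must miss some length-$(k+1)$ cycle $C^\star$, and conclude that the transcript is consistent with both $g=f$ and $g=f_{C^\star}$ since these functions agree everywhere except at $C^\star$.
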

\begin{proof}
Suppose the oracle answers the queries always according to the function $f$, and an algorithm terminates after asking less than $p^k$ queries.
Because $G(k,p)$ has $p^k$ cycles of length $k+1$, there exists some cycle $C$ so that the algorithm has not queried $C$, and therefore as $f$ and $f_C$ are equivalent on all inputs except $C$, all the answers are consistent with both $f$ and $f_C$.
Therefore the algorithm cannot decide correctly whether $g = f$ or $g = f_C$.
\end{proof}

Next we summarize the lower bound that follows from the constructions of the multigraph \Ggp, the functions $f$, and $f_C$, and \Cref{lem:pumpkinsLB}.

\begin{lemma}
\label{lem:LBMAIN}
For any positive integers $p,k$, there exists a graph $G$ with $kp+k+2$ vertices and an integer submodular function $f : 2^{V(G)} \rightarrow \mathbb{N}$ represented by an oracle so that deciding whether $\opt(G,f) = 2^{k+1}-2$ or $\opt(G,f) = 2^{k+1}-1$ requires at least $p^k$ queries to the oracle.
\end{lemma}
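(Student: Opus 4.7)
The plan is to combine \Cref{lem:pumpkinsLB} with the edge-to-vertex reduction already used in the proof of \Cref{cor:min-cycle}. Start with the multigraph \Ggp{} and the monotone submodular functions $f, f_C$ built above, then pass to a simple graph with a vertex-cost function by subdividing every edge of \Ggp{} exactly once. Since \Ggp{} has $k+1$ original vertices and $pk+1$ edges, the resulting graph $G$ has precisely $(k+1)+(pk+1)=pk+k+2$ vertices, which matches the bound claimed in the statement.

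Let $W\subseteq V(G)$ denote the set of subdivision vertices, and for each $w\in W$ let $e(w)\in E(G(k,p))$ be the edge that was subdivided to create $w$. Define
\[
g(X) \;=\; f\big(\{e(w) : w\in W\cap X\}\big), \qquad g_C(X) \;=\; f_C\big(\{e(w) : w\in W\cap X\}\big),
\]
for every $X\subseteq V(G)$. Exactly as in \Cref{cor:min-cycle}, $g$ and $g_C$ are monotone submodular with nonnegative integer values, and there is a natural bijection between cycles of \Ggp{} and cycles of $G$: every cycle of $G$ alternates original vertices and subdivision vertices and thus is the subdivision of a unique cycle of \Ggp{}, with matching cost. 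Consequently $\opt(G,g)=\opt(G(k,p),f)=2^{k+1}-1$ and $\opt(G,g_C)=\opt(G(k,p),f_C)=2^{k+1}-2$, so determining which of the two possible optima is realized on $G$ is exactly the task of deciding whether the oracle represents $g$ or some $g_C$.

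It remains to transfer the query lower bound. Any query to the vertex-oracle for $g$ on a set $X\subseteq V(G)$ is answered by a single query to the edge-oracle for $f$ on $\{e(w) : w\in W\cap X\}$, and likewise for $g_C$ versus $f_C$. Hence any algorithm solving the vertex-version with fewer than $p^k$ queries would yield an algorithm solving the edge-version in \Cref{lem:pumpkinsLB} with fewer than $p^k$ queries, a contradiction. The only step that needs slight care in writing up is the cycle bijection, to argue that $G$ contains no cycle other than subdivisions of cycles in \Ggp{} (which is immediate because every subdivision vertex has degree two and every neighbor of a subdivision vertex is an original vertex); beyond this bookkeeping, the proof is a direct composition of the preceding lemmas.
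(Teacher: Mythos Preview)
Your proof is correct and follows essentially the same approach as the paper: take \Ggp{} with the edge functions $f$ and $f_C$, apply the subdivision reduction from \Cref{cor:min-cycle} to pass to a simple graph with vertex costs, and invoke \Cref{lem:pumpkinsLB} for the query lower bound. You have simply written out explicitly the vertex count, the cycle bijection, and the one-to-one simulation of vertex-oracle queries by edge-oracle queries, whereas the paper compresses all of this into a reference to ``the argument of \Cref{cor:min-cycle}''.
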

\begin{proof}
We take the multigraph \Ggp with $m=kp+1$ edges and let $f \colon 2^{E(G(k,p))} \rightarrow \mathbb{N}$ be a function represented by an oracle.
By \Cref{lem:pumpkinsLB}, deciding whether $\opt(G(k,p),f) = 2^{k+1}-2$ or $\opt(G(k,p),f) = 2^{k+1}-1$ requires $p^k$ queries  to the oracle in the worst case.
This construction is for a multigraph where the function is on the edges, but by the argument of \Cref{cor:min-cycle} the lower bound also holds for graphs with $kp+k+2$ vertices where the function is on the vertices.
\end{proof}

By making use of \Cref{lem:LBMAIN}, we establish \Cref{thm:lowe:bound:gen} with  the lower bound matching the algorithmic bound of \Cref{thm:min-cycle}. 
We restate the theorem here. 
 
\thmlowboundcycle*
\begin{proof}
We assume without loss of generality that $g$ is non-decreasing and $g(x) \ge x$ for every $x\in \mathbb{R}_{\geq 0}$.
Assume that there is an algorithm that makes at most $t(\opt, n) = g(\opt) \cdot n^{o(\log \opt)}$ queries.
Now, there exists some large enough $N$ and $k'$ so that $t(\opt, n) < g(\opt) \cdot n^{(\log_2 \opt)/16}$ for all $\opt \ge k'$ and $n \ge N$.
We apply \Cref{lem:LBMAIN} with $p = g(4k') \cdot N$ and $k = \lceil \log_2 k' \rceil$. Let $n=kp+k+2=g(4k')N+ \lceil \log_2 k' \rceil+2$. Because $g$ is non-decreasing and $g(x) \ge x$ for every $x\in \mathbb{R}_{\geq 0}$, we have that $\lceil \log_2 k' \rceil\leq k'\leq g(4k')$ and it holds that  $N \le n \le 3g(4k')N\leq (g(4 k') \cdot N)^2$ if $N\geq 3$.
We get a graph with $n$ vertices, where $N \le n \le (g(4 k') \cdot N)^2$, and optimum $\opt$ with $k' \le \opt \le 4k'$ in which the problem requires at least
\[
(g(4 k') \cdot N)^{\lceil \log_2 k' \rceil} \ge g(4 k') \cdot (g(4 k') \cdot N)^{\lceil \log_2 k' \rceil - 1} \ge g(\opt) \cdot n^{(\log_2 \opt)/16}
\]
queries to solve.
This contradicts the existence of such an algorithm.
\end{proof}

We then establish \Cref{the:noquasip}.

\thmlowboundquasip*
\begin{proof}
Assume there is an algorithm that makes at most $t(n) = n^{o(\log n)}$ queries.
Now, there exists a large enough $N$ so that $t(n) < n^{(\log_2 n) / 4}$ for all $n \ge N$.
However, applying \Cref{lem:LBMAIN} with $p = N$ and $k = \lceil \log_2 N \rceil$ gives a graph with $n$ vertices, where $N \le n \le N^2$, and optimum $\opt \le 4n$, in which the problem requires at least $N^{\lceil \log_2 N \rceil} \ge n^{(\log_2 n) / 4}$
queries to solve.
This contradicts the existence of such an algorithm.
\end{proof}

\section{The wide family hitting problem}
\label{sec:wfhp}
Motivated by the question whether \textsc{Hedge Minimum Cycle} admits a polynomial-time algorithm, and the fact that our algorithm for monotone submodular functions is optimal already on a very restricted class of graphs  considered in Section~\ref{section:lowerbound}, we study the complexity of \textsc{Hedge Minimum Cycle} on the subdivisions of $G(k,p)$ (see \Cref{fig:lbG}).
In this class, \textsc{Hedge Minimum Cycle} is equivalent to a problem which we call \textsc{Wide Family Hitting}.

For an integer $k$ and a universe $U$, we say that a family $\mathcal{F}$ of sets is \emph{$k$-wide} if for any two distinct sets $A,B \in \mathcal{F}$ it holds that $|A \cup B| > k$.
In the \textsc{Wide Family Hitting} problem, the input consists of an integer $k$, a universe $U$, and $m$ $k$-wide families $\mathcal{F}_1, \ldots, \mathcal{F}_m$ over the universe $U$.
The task is to decide if it is possible to select one set from each family, i.e., sets $S_1 \in \mathcal{F}_1, S_2 \in \mathcal{F}_2, \ldots, S_m \in \mathcal{F}_m$ so that $|\bigcup_{i=1}^m S_i| \le k$.
We denote the size of the input by $N = \sum_{i=1}^m \sum_{A \in \mathcal{F}_i} |A|$.

To see the relations between \textsc{Hedge Minimum Cycle} and \textsc{Wide Family Hitting}, we first show reduction from  \textsc{Wide Family Hitting} to  \textsc{Hedge Minimum Cycle}. 
Consider  $m$ $k$-wide families $\mathcal{F}_1, \ldots, \mathcal{F}_m$ over the universe $U$. We construct the vertex-colored graph $G$, where the vertices are colored by the elements of $U$, as follows:
\begin{itemize}
\item construct $m+1$ vertices $v_0,\ldots,v_m$ and color them by a special  color $c\notin U$;
\item for each $i\in\{1,\ldots,m\}$, construct $|\mathcal{F}_i|$ $(v_{i-1},v_i)$-paths such that for every $S\in \mathcal{F}_i$, we have a path with $|S|$ internal vertices colored by the elements of $S\subseteq U$;
\item make $v_0$ and $v_m$ adjacent. 
\end{itemize}
Let also $k'=k+1$. It can be seen that $G$ has a cycle $C$, whose vertices are colored by at most $k'$ colors, if and only if there are $S_i\in\mathcal{F}_i$ for $i\in\{1,\ldots,m\}$ such that $|\bigcup_{i=1}^m S_i| \le k$. To prove this, notice that because $\mathcal{F}_1, \ldots, \mathcal{F}_m$ are $k$-wide, any cycle $C$ in $G$ containing vertices of at most $k'=k+1$ colors should contain $(v_{i-1},v_i)$-paths for each $i\in\{1,\ldots,m\}$. For every $i\in\{1,\ldots,m\}$, let $S_i\in\mathcal{F}_i$ be the set of colors of the internal vertices of the $(v_{i-1},v_i)$-path in $C$. If $C$ contains vertices of at most $k'=k+1$ colors, then $|\bigcup_{i=1}^m S_i| \le k$. For the opposite direction, let $S_i\in\mathcal{F}_i$ for $i\in\{1,\ldots,m\}$ be such that $|\bigcup_{i=1}^m S_i| \le k$. Then we construct the cycle $C$ in $G$ by concatenating the $(v_{i-1},v_i)$-paths whose internal  vertices are colored by the elements of $S_i$ and completing the cycle by the addition of the edge $v_{0}v_m$. 
Clearly, the vertices of $C$ are colored by at most $k'=k+1$ colors. 

To reduce  \textsc{Hedge Minimum Cycle} to \textsc{Wide Family Hitting} on subdivisions of the graphs $G$ illustrated on \Cref{fig:lbG}, assume that $G$ is of the following form:
\begin{itemize}
\item $G$ has $m+1$ vertices $v_0,\ldots,v_m$ and $v_0$ is adjacent to $v_m$,
\item for each for each $i\in\{1,\ldots,m\}$, $G$ has a family  of vertex-disjoint paths $\mathcal{P}_i$ such that each path has at least one internal vertex. 
\end{itemize}
Suppose also that $c\colon V(G)\rightarrow U$ is a coloring function that colors the vertices of $G$ by colors from a set $U$. 
For each $i\in\{1,\ldots,m\}$, we define $\mathcal{F}_i=\{c(P)\colon P\in \mathcal{P}_i\}$, that is, $\mathcal{F}_i$ is the family of the sets of colors of the paths from $\mathcal{P}_i$. 
If $C$ is a cycle of $G$, then either (i) $C$ is formed by two paths $P,Q\in \mathcal{P}_i$ for some $i\in\{1,\ldots,m\}$ or (ii) $C$ contains the concatenation of $m$ paths $P_i\in \mathcal{P}_i$ for $i\in \{1,\ldots,m\}$. If we are looking for a cycle $C$ containing at most $k$ colors, we can use brute force to check whether there is such a cycle of type (i), because the number of such cycles is quadratic in the size of $G$. Suppose that this is not the case and we have  (ii). Then each family $\mathcal{F}_i$ is $k$-wide, and a cycle containing vertices of at most $k$ colors exists if and only if 
 there are $S_i\in\mathcal{F}_i$ for $i\in\{1,\ldots,m\}$ such that $|\bigcup_{i=1}^m S_i| \le k$. 

We first show that in contrast to the lower bound from \Cref{the:noquasip}, 
the \textsc{Wide Family Hitting} problem is fixed-parameter-tractable when parameterized by $k$.
We use the following lemma for it.

\begin{lemma}
\label{lem:setbound}
Let $X \subseteq U$ be a set and $\mathcal{F}$ a $k$-wide family of sets over $U$.
There are at most $2^{|X|}$ sets $A \in \mathcal{F}$ with $|A \cup X| \le k$ and $|A| \le |X|$.
\end{lemma}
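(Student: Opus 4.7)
The plan is to exhibit an injection from $\mathcal{A} = \{A \in \mathcal{F} : |A \cup X| \le k, |A| \le |X|\}$ into the power set $2^X$, namely the map $\varphi(A) = A \cap X$. Since $|2^X| = 2^{|X|}$, injectivity of $\varphi$ immediately yields the desired bound $|\mathcal{A}| \le 2^{|X|}$.

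To prove injectivity, I would argue by contradiction. Suppose $A, B \in \mathcal{A}$ are distinct but $A \cap X = B \cap X$. I will derive the bound $|A \cup B| \le k$, which contradicts the $k$-wideness of $\mathcal{F}$. The key observation is that the coincidence $A \cap X = B \cap X$ collapses the ``inside-$X$'' part of $A \cup B$, while the constraint $|A| \le |X|$ lets me bound the ``outside-$X$'' contribution by swapping $A$ for $X$.

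Concretely, decomposing $A \cup B$ into its intersection with $X$ and its complement in $X$, and noting $(A\cup B)\cap X = A\cap X$, I would write
\[
|A \cup B| = |A \cap X| + |(A \setminus X) \cup (B \setminus X)| \le |A \cap X| + |A \setminus X| + |B \setminus X| = |A| + |B \setminus X|.
\]
Using $|A| \le |X|$ and then recognizing $|X| + |B \setminus X| = |X \cup B|$, this gives
\[
|A \cup B| \le |X| + |B \setminus X| = |X \cup B| \le k,
\]
where the final inequality uses the defining property of $\mathcal{A}$ applied to $B$. This contradicts the assumption that $\mathcal{F}$ is $k$-wide, establishing injectivity of $\varphi$ and hence the lemma.

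I do not anticipate any real obstacle: the argument is a clean counting reduction. The only conceptual point is identifying the correct injection (restriction to $X$) and noticing that the two hypotheses $|A| \le |X|$ and $|A \cup X| \le k$ are jointly calibrated precisely so that two sets with the same $X$-footprint would violate $k$-wideness via the chain $|A \cup B| \le |A| + |B \setminus X| \le |X \cup B| \le k$.
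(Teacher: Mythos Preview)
Your proof is correct and follows essentially the same approach as the paper: both argue that the map $A \mapsto A \cap X$ is injective on $\mathcal{A}$ by showing that distinct $A,B$ with $A\cap X = B\cap X$ would satisfy $|A\cup B|\le k$, contradicting $k$-wideness. The paper's inequality chain is the slightly different but equivalent $|A\cup B| = |A|+|B|-|A\cap B| \le |A|+|B|-|A\cap X| \le |A|+|X|-|A\cap X| = |A\cup X| \le k$.
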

\begin{proof}
Suppose there are sets $A,B \in \mathcal{F}$ with $A \cap X = B \cap X$, $|A \cup X| \le k$, $|B \cup X| \le k$, $|A| \le |X|$, and $|B| \le |X|$.
Then we have that
\[
|A \cup B| = |A| + |B| - |A \cap B| \le |A| + |B| - |A \cap X| \le |A| + |X| - |A \cap X| = |A \cup X| \le k,
\]
which would contradict the fact that $\mathcal{F}$ is $k$-wide.
Therefore, all sets $A \in \mathcal{F}$ with $|A \cup X| \le k$ and $|A| \le |X|$ have a different intersection with $X$, implying that there are at most $2^{|X|}$ of them.
\end{proof}

We will also use the following lemma in both of the algorithms of this section.

\begin{lemma}
\label{lem:onelight}
Let $X \subseteq U$ be a set with $|X| \le k$ and $\mathcal{F}$ a $k$-wide family of sets over $U$.
For any two sets $A,B \in \mathcal{F}$ it holds that $|X \cup A| - |X| + |X \cup B| - |X| > k - |X|$.
\end{lemma}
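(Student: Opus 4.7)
The plan is a short inclusion–exclusion argument that uses $k$-wideness exactly once. I would first observe that the statement must intend $A$ and $B$ to be distinct members of $\mathcal{F}$: otherwise, taking $A=B\subseteq X$ makes the left-hand side equal to $0$ while the right-hand side $k-|X|$ is nonnegative (since $|X|\le k$), so the inequality would fail. Under this distinctness reading, the proof becomes a one-line manipulation.

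The key identity I would invoke is, by inclusion–exclusion applied to $X\cup A$ and $X\cup B$,
\[
|X\cup A|+|X\cup B| \;=\; |(X\cup A)\cup(X\cup B)|+|(X\cup A)\cap(X\cup B)| \;=\; |X\cup A\cup B|+|X\cup (A\cap B)|.
\]
The first summand satisfies $|X\cup A\cup B|\ge |A\cup B|>k$, where the strict inequality is exactly the $k$-wideness of $\mathcal{F}$ applied to the distinct pair $A,B\in\mathcal{F}$. The second summand satisfies $|X\cup(A\cap B)|\ge |X|$ trivially. Adding these two bounds gives $|X\cup A|+|X\cup B|>k+|X|$, and subtracting $2|X|$ from both sides yields the desired inequality $(|X\cup A|-|X|)+(|X\cup B|-|X|)>k-|X|$.

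There is no real obstacle in the argument itself — it is one application of inclusion–exclusion followed by a single use of $k$-wideness and a trivial lower bound. The only subtlety worth flagging is the implicit distinctness hypothesis on $A$ and $B$, which should be stated explicitly (or verified to be consistent with the way the lemma is invoked in the FPT algorithm of \Cref{the:wfh-fpt} and the randomized algorithm of \Cref{the:wft-bound-deg}, where the intended reading is that among any two distinct sets of $\mathcal{F}$ at most one can be ``light'' relative to $X$, i.e.\ have $|X\cup A|-|X|\le (k-|X|)/2$).
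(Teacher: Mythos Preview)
Your proof is correct and follows essentially the same approach as the paper: the paper's one-line argument is $|X\cup A|+|X\cup B|-|X|\ge |A\cup B|>k$, which is exactly your inclusion--exclusion bound $|X\cup A|+|X\cup B|=|X\cup A\cup B|+|X\cup(A\cap B)|\ge |A\cup B|+|X|$ rearranged. Your observation that the lemma implicitly requires $A\neq B$ is also correct and worth flagging.
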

\begin{proof}
Note that $|X \cup A| + |X \cup B| - |X| \ge |A \cup B| > k$.
\end{proof}

Now we give our FPT algorithm.

\thmwfhfpt*
\begin{proof}
First, we guess the largest set $X = S_i$ selected to the solution.
Then we can remove all sets $A$ from the other families with $|A| > |X|$ or $|A \cup X| > k$.
Therefore, as $|X| \le k$, by \Cref{lem:setbound}, we can now assume that $|\mathcal{F}_i| \le 2^k$ for each $i$.

Then, we process the families $\mathcal{F}_i$ in an order from $i=1$ to $i=m$, accumulating a partial solution $X$ being the union of the selected sets so far.
Suppose that at index $i$, the family $\mathcal{F}_i$ contains a set $S_i$ with $S_i \subseteq X$.
Then, we can greedily include $S_i$ to the solution.
Otherwise, we branch on which set $S_i \in \mathcal{F}_i$ we include to the solution, which increases the size of our partial solution $X$.
As we can increase $X$ at most $k$ times and $|\mathcal{F}_i| \le 2^k$, this gives a $(2^k)^k N^{\Oh(1)} = 2^{\Oh(k^2)} N^{\Oh(1)}$ time algorithm.

To optimize the algorithm to $2^{\Oh(k \log k)} N^{\Oh(1)}$ time, we say that a set $A \in \mathcal{F}_i$ is \emph{light} with respect to the partial solution $X$ if $|X \cup A| - |X| \le (k - |X|)/2$.
In particular, a set $A$ is light if including it to $X$ decreases the remaining budget by at most half, while a set $A$ is \emph{heavy} if including it to $X$ decreases the remaining budget by more than a half.
By \Cref{lem:onelight}, $\mathcal{F}_i$ contains at most one light set.

In the branching, we can select a heavy set at most $\Oh(\log k)$ times, and otherwise we select a light set.
As there are $2^k$ options only when we select a heavy set and only one option when we select a light set, the time complexity becomes $2^k (2^k)^{\Oh(\log k)} N^{\Oh(1)} = 2^{\Oh(k \log k)} N^{\Oh(1)}$.
\end{proof}

Then, we give a polynomial-time algorithm when $|\mathcal{F}_i|$ is bounded.

\thmwfhbounddeg*
\begin{proof}
As in the proof of \Cref{the:wfh-fpt}, we again process the families $\mathcal{F}_i$ from $i=1$ to $i=m$, but this time instead of branching, we decide probabilistically which set to include in the solution.

At step $i$, let $X \subseteq U$ denote the accumulated partial solution so far (the union of the selected sets), and let $b = k - |X|$ be the remaining budget.
Again, as in \Cref{the:wfh-fpt}, we say that set $A \in \mathcal{F}_i$ is light if $|X \cup A| - |X| \le b/2$, i.e., including $A$ to the solution takes less than half of the remaining budget.
Otherwise a set $A \in \mathcal{F}_i$ is heavy.
By \Cref{lem:onelight}, there is at most one light set in $\mathcal{F}_i$.

First, if there is a set $A \in \mathcal{F}_i$ with $A \subseteq X$, we can greedily select the set $A$.
Otherwise, if $b = 0$ we must return that there is no solution, and if $b \ge 1$, our algorithm selects a set from $\mathcal{F}_i$ as follows.
If there is a light set $L \in \mathcal{F}_i$, let $c = |X \cup L| - |X| \ge 1$ be the cost of $L$.
Otherwise, we let $c = b/2$.
Note that in both cases $c \le b/2$.
By \Cref{lem:onelight}, the cost of any heavy set $H \in \mathcal{F}$ is $|X \cup H| - |X| > b-c$.
Our algorithm includes the light set $L$ to the solution with probability $\frac{b-c}{b}$ (if a light set exists), and any heavy set $H$ with probability $\frac{c}{b d}$.
Note that as $|\mathcal{F}_i| \le d$, these probabilities sum up to a number at most 1.

We claim that the probability that our algorithm finds a solution if one exists is at least 
\[
\frac{1}{2b} \cdot \left( \frac{1}{d} \right)^{1 + \log_2 b}.
\]
We prove this by induction on $b$.
The base case is that the set in $\mathcal{F}_i$ that belongs to the solution takes up all of the remaining budget, in particular, that a heavy set $H \in \mathcal{F}_i$ with $|X \cup H| - |X| = b$ is in the solution.
In this case, the algorithm is correct as long as it selects $H$ at this step, as the remaining steps will be deterministic.
As $c \ge 1/2$ and $b \ge 1$, the probability of correctness is
\[
\frac{c}{bd} \ge \frac{1}{2b} \cdot \frac{1}{d} \ge \frac{1}{2b}   \left( \frac{1}{d} \right)^{1 + \log_2 b},
\]
so the base case is satisfied.

Otherwise, a set from $\mathcal{F}_i$ that does not take all of the remaining budget belongs to the correct solution.
Suppose this set is light.
Now, by induction, the probability that the algorithm is correct is
\[
\frac{b-c}{b} \cdot \frac{1}{2 (b-c)} \cdot \left( \frac{1}{d} \right)^{1+\log_2 (b-c)} = \frac{1}{2b} \cdot \left( \frac{1}{d} \right)^{1+\log_2 (b-c)} \ge \frac{1}{2b} \cdot \left( \frac{1}{d} \right)^{1+\log_2 b},
\]
so the induction holds.

Then, suppose that the set from $\mathcal{F}_i$ that belongs to the correct solution is a heavy set $H \in \mathcal{F}_i$.
Note that in this case the remaining budget is $b - (|X \cup H| - |X|) < b - (b - c) < c < b/2$.
By induction, the algorithm is correct with probability
\[
\frac{c}{bd} \cdot \frac{1}{2(b - (|X \cup H| - |X|))} \cdot \left( \frac{1}{d} \right)^{1 + \log_2 (b - (|X \cup H| - |X|))} \ge \frac{c}{b} \cdot \frac{1}{2c} \cdot \left( \frac{1}{d} \right)^{2 + \log_2(b/2)} \ge \frac{1}{2b} \cdot \left(\frac{1}{d}\right)^{1 + \log_2 b}
\]

We have analyzed all cases in the induction, and therefore the algorithm is correct with probability $\frac{1}{2b} \cdot (1/d)^{1 + \log_2 b}$, and therefore (as initially $b=k$), repeating it $2k \cdot d^{\Oh(\log k)} = k^{\Oh(\log d)}$ times yields a correct result wth constant probability.
\end{proof}

\section{Conclusion}\label{sec:concl} 
We gave an $n^{\Oh(\log 1/\varepsilon)}$ time PTAS for the shortest monotone submodular cycle problem, and showed unconditional lower bounds establishing that this algorithm is optimal even in a very restricted setting, in particular even when the function is integer-valued, $\opt = \Oh(n)$, and the graph is planar and has bounded pathwidth.  

We leave several open questions. 
The main question about minimum cycles is the complexity of  \textsc{Hedge Minimum Cycle}. From what we know, there is no evidence against the existence of a polynomial-time algorithm. On the other hand, it also could be that our quasipolynomial-time algorithm for integer-valued monotone submodular functions is also optimal for  \textsc{Hedge Minimum Cycle}. 
This problem seems difficult, and therefore we believe it is worth exploring even some special cases of it.
In particular, we also ask if the \textsc{Wide Family Hitting} problem admits a polynomial-time algorithm.
While \Cref{the:wfh-fpt} shows that the special case of \textsc{Wide Family Hitting}  is fixed-parameter tractable, it remains a challenging question whether the \textsc{Hedge Minimum Cycle}  is fixed-parameter tractable in the general case. Of course, if the problem is in P this would resolve all these questions.

In the other direction, towards showing the hardness of  \textsc{Hedge Minimum Cycle}, we ask a purely combinatorial question which is a prerequisite for showing the hardness.
We say that a subset $S$ of the hedges is a \emph{minimal partial solution} if $|S| \le k$, and there is a pair of vertices $s,t$ so that $S$ induces a $(s,t)$-path, but no subset of $S$ induces an $(s,t)$-path.
We ask if there is a construction of a graph with hedges where the number of minimal partial solutions is superpolynomial.
Note that if the number of minimal partial solutions is polynomially bounded, then we can solve \textsc{Hedge Minimum Cycle} in polynomial time by a simple algorithm enumerating them.


\bibliographystyle{siam}

\end{document}